\documentclass[11pt]{article}
  
\usepackage{cite}
\usepackage{graphicx}
\usepackage{amsmath}
\usepackage{amsthm}
\usepackage{yhmath} 
\usepackage{hyperref}

\usepackage{amsfonts} 
\usepackage{amssymb}
\usepackage{fullpage}
\usepackage{color}
\usepackage{latexsym}

\usepackage{ccicons}
\usepackage{cclicenses}
\usepackage{algorithm}
\usepackage[noend]{algorithmic}
\usepackage{multirow} 
\usepackage{amsmath}
\usepackage{yhmath}
\usepackage{amsthm}
\usepackage[utf8]{inputenc}
\usepackage{nameref}
\usepackage{lineno}

\newcommand{\old}[1]{{}}
\newtheorem{theorem}{Theorem}[section]

\newtheorem{lemma}[theorem]{Lemma}

\newtheorem{obs}[theorem]{Observation}

\newcommand{\C}{{{\cal{C}}}}
\newcommand{\A}{{{\cal{A}}}}

\renewcommand{\P}{{{\cal{P}}}}

\newcommand{\I}{{{\cal{I}}}}

\newcommand{\eps}{{\varepsilon}}
\newcommand{\f}{{\varphi}}

\newcommand{\seg}[1]{\overline{#1}}

\title{The Minimum Dominating Set Problem on Bipartite Circle Graphs: Complexity and Approximation}
\author{ A. Karim Abu-Affash\thanks{Software Engineering Department, Shamoon College of Engineering, Beer-Sheva, Israel, {\tt abuaa1@sce.ac.il}.}
\and 
Paz Carmi\thanks{The Stein Faculty of Computer and Information Science, Ben-Gurion University, Beer-Sheva, Israel, {\tt carmip@bgu.ac.il}. }
\and
Joseph S. B. Mitchell\thanks{Department of Applied Mathematics and Statistics, Stony Brook University, Stony Brook, NY, USA, {\tt joseph.mitchell@stonybrook.edu}. }
}

\begin{document}

\maketitle

\begin{abstract}
A circle graph is the intersection graph of a set of chords in a circle. A dominating set of a graph $G=(V,E)$ is a subset $D\subseteq V$ such that every vertex in $V\setminus D$ is adjacent to at least one vertex of $D$. Computing a minimum dominating set is known to be NP-hard on circle graphs.

In this paper, we study the minimum dominating set problem on bipartite circle graphs, namely, circle graphs admitting a chord representation in which the chords can be partitioned into two color classes such that no two chords of the same color intersect. We prove that the problem remains NP-hard for this restricted graph class by a reduction from Planar Monotone 3-SAT. On the positive side, we present a polynomial-time 2-approximation algorithm and develop a polynomial-time approximation scheme (PTAS) based on local search.
\end{abstract}



\section{Introduction}
Given a graph $G = (V,E)$, a subset $D$ of $V$ is a \emph{dominating set} of $G$ if every vertex in $V \setminus D$ is adjacent to at least one vertex of $D$. 
The \emph{minimum dominating set} (MDS) problem is to compute a dominating set $D$ of minimum cardinality. 
The MDS problem is fundamental in the field of graph theory. The problem has real-world applications in network design, facility location, and the analysis of social networks~\cite{Wu1999, Corcoran2021, Haynes2017, Czuba2025, Kelleher1988}. It has been shown that the MDS problem is NP-hard on general graphs~\cite{Garey1978}, as are many other variants of domination, including total dominating set, connected dominating set, independent dominating set, and dominating clique~\cite{Cockayne1980, Garey1978, Goddard2013}. All of these variants, except for independent dominating set, are also NP-hard on chordal graphs~\cite{Brandstadt1987, Booth1982, Farber1982, Laskar1983, Lasker1984}. For a summary of the complexity of these variants on other classes of graphs, see Table~1 in~\cite{Corneil1984}.


A \emph{circle} graph $G=(V,E)$ is the intersection graph of a set $\C$ of chords in a circle, such that each vertex $v \in V$ uniquely corresponds to a chord, and there is an edge $(u,v) \in E$ if and only if the two chords corresponding to $u$ and $v$ intersect. $\C$ is called a \emph{chord intersection model} of the circle graph $G$. Equivalently, an \emph{interval model} of $G$ is a set $\I$ of intervals located on a horizontal line $L$, such that the vertices of $V$ uniquely correspond to the intervals of $\I$, and two vertices in $V$ are adjacent if and only if the corresponding intervals intersect, i.e., overlap, but neither contains the other. The three 
models of a circle graph (as a graph, as a set of chords, and as a set of intervals) are equivalent via linear time transformations~\cite{Gavril1973}. 
%
Thus, w.l.o.g., when specifying instances of problems, we assume the availability of the most convenient model.

Circle graphs have been extensively studied in the literature~\cite{Bousquet2014, Damian-Iordache2002, Damian2006, Gavril1973, Keil1993, Reddy2026}. 
Many problems that are NP-hard in general graphs become solvable in polynomial time when restricted to circle graphs, including maximum clique and maximum independent set~\cite{Gavril1973}, minimum feedback vertex set~\cite{Gavril2008}, and dominating clique~\cite{Keil1993}. On the other hand, some problems remain NP-hard in circle graphs, such as Hamiltonian cycle~\cite{Damaschke1989}, minimum clique cover~\cite{Keil2006}, and $k$-coloring for $k \ge 4$~\cite{Unger1988}.

Determining the complexity of the MDS problem in circle graphs was first asked by Johnson~\cite{Johnson1985} in 1985. This question remained open until Keil~\cite{Keil1993} proved in 1993 that the MDS problem is NP-complete. Later, Damian and Pemmaraju~\cite{Damian-Iordache2002} provided a $(2 + \eps)$-approximation scheme for the problem and proved that the problem is APX-hard~\cite{Damian2006}. Finally, Bousquet et al.~\cite{Bousquet2014} showed that the MDS problem in circle graphs is $W[1]$-hard parameterized by the size of the solution.

In this paper, we assume that $\C$ consists of 
chords colored red and blue, such that no two chords of the same color intersect; see  Figure~\ref{fig:example}(a). Hence, the resulting circle graph is \emph{bipartite}. Moreover, we also use the interval model. That is, the intervals in $\I$ are partitioned into two sets: a set $\I_R$ of red intervals corresponding to the red chords of $\C$ and a set $\I_B$ of blue intervals corresponding to the blue chords of $\C$. 
For ease of presentation, we represent the intervals of $\I_R$ as red curves below the line $L$ and the intervals of $\I_B$ as blue curves above $L$. An example of a chord intersection model $\C$ of a bipartite circle graph and its corresponding interval model $\I$ is given in Figure~\ref{fig:example}.
\begin{figure}[htb]
    \centering
    \includegraphics[width=0.92\textwidth]{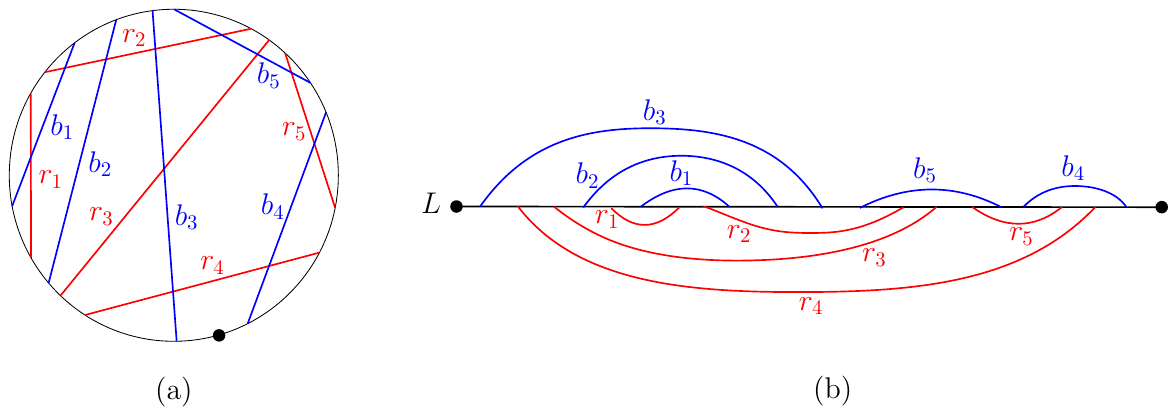}
    \caption{ (a) A chord intersection model $\C$ of a bipartite circle graph. (b) Its corresponding interval model $\I$.}
    \label{fig:example}
\end{figure}

\paragraph{Our contributions.}
We study the MDS problem on bipartite circle graphs, a natural subclass of circle graphs that admits a chord representation in which the chords can be colored red and blue so that no two chords of the same color intersect. 
We prove (in Section~\ref{sec:hardness}) that the problem remains NP-hard even under this restriction. Since the known NP-hardness construction for circle graphs (used by Keil~\cite{Keil1993}) does not preserve bipartiteness, we provide a new reduction from Planar Monotone 3-SAT. 
We also provide (in Section~\ref{sec:approxAlg}) a polynomial-time $2$-approximation algorithm. The algorithm computes an optimal set of red intervals that dominate all blue intervals and, symmetrically, an optimal set of blue intervals that dominate all red intervals. 
Finally, we show (in Section~\ref{sec:PTAS}) that the problem admits a polynomial-time approximation scheme (PTAS). We show that the standard local-search yields a PTAS for computing a minimum dominating set in bipartite circle graphs. 

\section{NP-hardness} \label{sec:hardness}
In this section, we prove that the MDS problem on bipartite circle graphs is NP-hard. 
More precisely, we prove that, given a bipartite circle graph $G=(V,E)$ and an integer $k$, determining whether there exists a dominating set of $G$ of size at most $k$ is NP-hard. The proof is by a reduction from \emph{Planar Monotone 3-SAT}, which is known to be NP-hard~\cite{deBerg2012}.

A 3-SAT formula is \emph{monotone} if each clause consists of only three positive or only three negative literals. The corresponding graph of 3-SAT formula $\f$ is a bipartite graph $G_\f$, such that each variable and each clause in $\f$ is a vertex in $G_\f$ and there is an edge between a variable $x_i$ and a clause $C_j$ if and only if $x_i$ or its negation $\seg{x_i}$ appears in $C_j$. 
If $G_\f$ is planar, then $\f$ is called a planar 3-SAT formula. 

Given a planar monotone 3-SAT formula $\f$, a \emph{monotone rectilinear representation} of $\f$ is a rectilinear representation where (i) the variables are drawn as axis-aligned rectangles arranged along a horizontal line $l$, (ii) all positive clauses are drawn as axis-aligned rectangles and lie on one side (above) of $l$, (iii) all negative clauses are drawn as axis-aligned rectangles and lie on the other side (below) of $l$, and (iv) the edges connecting the variables to the clauses are drawn as vertical segments. See Figure~\ref{fig:PM3SAT} for an example.   
\begin{figure}[htb]
    \centering
    \includegraphics[width=0.65\textwidth]{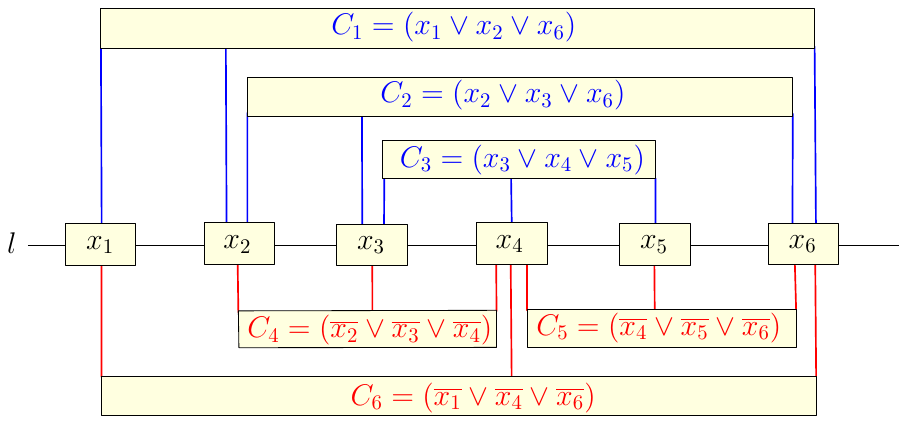}
    \caption{ An example of a monotone rectilinear representation of a planar monotone 3-SAT formula.}
    \label{fig:PM3SAT}
\end{figure}

Let $\f$ be a planar monotone 3-SAT formula defined on the variables $x_1,\dots,x_n$ and the clauses $C_1,\dots,C_m$, with its monotone rectilinear representation. We show how to construct, in polynomial time, a set of bicolored intervals $\I$ on a horizontal line $L$ (corresponding to a bipartite circle graph $G$) and fix an integer $k$, such that $G$ has a dominating set of size $k$ if and only if $\f$ is satisfiable. Let $L$ be a horizontal line. We draw the blue intervals as curves above $L$ and the red intervals as curves below $L$.

\paragraph{Variable gadget.} 
For each variable $x_i$, we create a set $A_i$ consisting of 10 intervals: five red intervals $r_i^1,r_i^2,\dots, r_i^5$ arranged from left to right on $L$ and five blue intervals $b_i^1,b_i^2,\dots, b_i^5$ arranged from right to left on $L$, such that (i) $r_i^2$ is contained in $r_i^3$, and $r_i^1, r_i^3, r_i^4, r_i^5$ are pairwise disjoint, (ii) $b_i^2$ is contained in $b_i^3$, and $b_i^1, b_i^3, b_i^4, b_i^5$ are pairwise disjoint, and (iii) $(r_i^1, b_i^5, r_i^2, b_i^4, r_i^4, b_i^2, r_i^5, b_i^1)$ forms a simple path in $G$; see Figure~\ref{fig:variable}.
\begin{figure}[htb]
    \centering
    \includegraphics[width=0.45\textwidth]{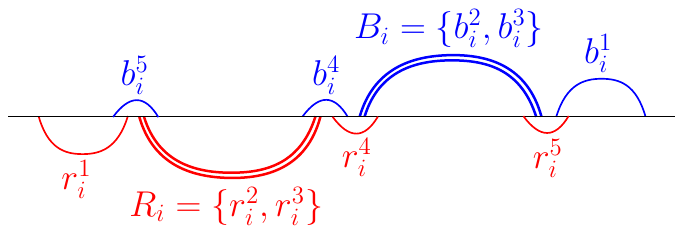}
    \caption{ A gadget of a variable $x_i$.}
    \label{fig:variable}
\end{figure}

Let $R_i=\{r_i^2,r_i^3\}$ and $B_i=\{b_i^2,b_i^3\}$. Notice that each set of intervals $A_i$ corresponding to a variable $x_i$ can be optimally dominated by three intervals from $A_i$. 
Moreover, any minimum dominating set $D_i$ of $A_i$ must satisfy: (i) $D_i$ contains only one of the intervals $r_i^1,b_i^5$, only one of the intervals $r_i^4,b_i^4$, and only one of the intervals $r_i^5,b_i^1$, (ii) $D_i$ does not contain any of the intervals of $R_i \cup B_i$, (iii) if $r_i^1 \in D_i$, then $b_i^1 \notin D_i$, and (iv) if $r_i^4 \in D_i$, then $b_i^4 \notin D_i$. This follows by a straightforward case analysis on the constant-size gadget in Figure~\ref{fig:variable}.
Therefore, we have the following observation.
\begin{obs}\label{obs:var}
$A_i$ can be optimally dominated by only one of the sets $\{r_i^1,b_i^4,r_i^5\}$, $\{r_i^4,r_i^5,b_i^5\}$, $\{b_i^4,b_i^5,r_i^5\}$, or $\{b_i^1,r_i^4,b_i^5\}$.
\end{obs}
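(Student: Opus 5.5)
The plan is to first pin down the adjacency structure of $A_i$ from the construction and then settle the claim by a counting lower bound followed by a short case analysis. From (iii), the edges $r_i^1b_i^5$, $b_i^5r_i^2$, $r_i^2b_i^4$, $b_i^4r_i^4$, $r_i^4b_i^2$, $b_i^2r_i^5$ and $r_i^5b_i^1$ are present.

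The containments $r_i^2\subset r_i^3$ and $b_i^2\subset b_i^3$ in Figure~\ref{fig:variable} are drawn so that $r_i^3$ crosses exactly the blue intervals crossed by $r_i^2$, namely $b_i^5$ and $b_i^4$. Symmetrically, $b_i^3$ crosses exactly $r_i^4$ and $r_i^5$. This is a reading of the figure that I am taking as given; it is consistent with the fact that each of the four listed sets dominates $r_i^3$ and $b_i^3$.

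With these edges, the closed neighbourhoods are
\[
N[r_i^1]=\{r_i^1,b_i^5\},\qquad N[b_i^1]=\{b_i^1,r_i^5\},\qquad N[r_i^4]=\{r_i^4,b_i^4,b_i^2,b_i^3\}.
\]
These three sets are pairwise disjoint. Every dominating set of $A_i$ must meet each of them, so it has size at least $3$. Direct verification shows that each of the four listed triples dominates $A_i$. For example, in $\{r_i^1,b_i^4,r_i^5\}$, $r_i^1$ covers $b_i^5$, $b_i^4$ covers $r_i^2,r_i^3,r_i^4$, and $r_i^5$ covers $b_i^1,b_i^2,b_i^3$. Hence the optimum is exactly $3$.

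For the converse, a minimum dominating set $D_i$ contains exactly one element from each of the three closed neighbourhoods above and nothing else. The case analysis is driven by the element $z\in N[r_i^4]$.

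If $z\in\{b_i^2,b_i^3\}$, consider the vertex $b_i^4$, whose closed neighbourhood is $\{b_i^4,r_i^2,r_i^3,r_i^4\}$. This set is disjoint from $D_i$, since the other two elements of $D_i$ lie in $\{r_i^1,b_i^5,b_i^1,r_i^5\}$. So $b_i^4$ is undominated, a contradiction.

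If $z=r_i^4$, then $r_i^2$ and $r_i^3$ can only be dominated by $b_i^5$, which forces $b_i^5\in D_i$. The third element may be $r_i^5$ or $b_i^1$; both choices dominate $b_i^1$, and $b_i^2,b_i^3$ are already covered by $r_i^4$. This yields $\{r_i^4,r_i^5,b_i^5\}$ and $\{b_i^1,r_i^4,b_i^5\}$.

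If $z=b_i^4$, then $b_i^2$ and $b_i^3$ can only be dominated by $r_i^5$, which forces $r_i^5\in D_i$. The third element may be $r_i^1$ or $b_i^5$, and both dominate $r_i^1$ and $b_i^5$. This yields $\{r_i^1,b_i^4,r_i^5\}$ and $\{b_i^4,b_i^5,r_i^5\}$.

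These are precisely the four sets of the observation, and properties (i)--(iv) stated before it follow immediately. The only step I regard as a potential obstacle is the first one: justifying from the geometric construction that $r_i^3$ (respectively $b_i^3$) has exactly the same neighbours as $r_i^2$ (respectively $b_i^2$) and has no further crossings. Once that adjacency list is fixed, the rest is the finite check above.
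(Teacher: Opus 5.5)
Your proof is correct and follows the paper's approach. The paper only says the observation follows by a straightforward case analysis of the constant-size gadget, and your packing bound via the pairwise disjoint sets $N[r_i^1]$, $N[b_i^1]$, $N[r_i^4]$, followed by the case split on which element of $N[r_i^4]$ is chosen, is a clean explicit version of that analysis.

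The adjacency you read off the figure is actually forced by (i)--(iii) and the left-to-right arrangement. First, $b_i^5$ crosses the disjoint intervals $r_i^1$ and $r_i^2$, so it has one endpoint in $r_i^2\subset r_i^3$ and the other outside $r_i^3$, and hence crosses $r_i^3$. Likewise $b_i^4$ crosses $r_i^3$, and symmetrically $b_i^3$ crosses $r_i^4$ and $r_i^5$. Second, the endpoints of $r_i^3$ lie in $b_i^5$ and $b_i^4$ respectively, so $r_i^3$ crosses nothing else (and symmetrically for $b_i^3$).
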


In our construction, we show that there exists a minimum dominating set of $\I$ that includes only one of the sets $\{r_i^1,b_i^4,r_i^5\}$ or $\{b_i^1,r_i^4,b_i^5\}$.  
Thus, we associate the set $\{r_i^1,b_i^4,r_i^5\}$ with the assignment $x_i = T$ and the set $\{b_i^1,r_i^4,b_i^5\}$ with the assignment $x_i = F$.

We arrange (the intervals of) the gadgets $A_1, A_2,\dots, A_n$ on $L$, such that $R_n \subset R_{n-1} \subset \dots \subset R_2 \subset R_1$, $B_1 \subset B_2 \subset \dots \subset B_n$, and the intervals of every two gadgets $A_i$ and $A_j$ can intersect each other only with $R_i, B_i$ and $R_j, B_j$; see Figure~\ref{fig:variables}.
\begin{figure}[htb]
    \centering
    \includegraphics[width=0.78\textwidth]{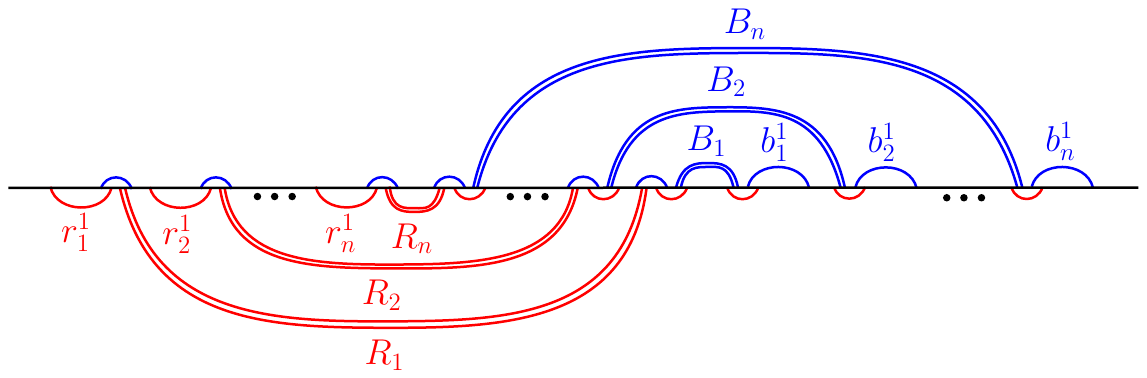}
    \caption{ Arranging the variables gadgets on $L$.}
    \label{fig:variables}
\end{figure}

\paragraph{Positive clause gadgets.} For each positive clause $C_j=(x_p \vee x_q \vee x_t)$, we create two disjoint blue intervals $b_j^{p,q},b_j^{q,t}$. Moreover, we create a set $R'_j$ containing two red intervals $r'_j, r''_j$, such that $r'_j$ is contained in $r''_j$ and they intersect both $b_j^{p,q}$ and $b_j^{q,t}$; see Figure~\ref{fig:clause}(a).
 \begin{figure}[htb]
    \centering
    \includegraphics[width=0.8\textwidth]{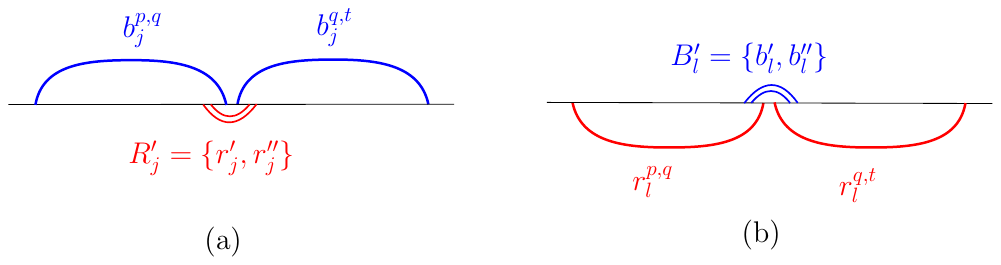}
    \caption{(a) A gadget of a positive clause $C_j=(x_p \vee x_q \vee x_t)$. (b) A gadget of a negative clause $C_l=(\seg{x_p} \vee \seg{x_q} \vee \seg{x_t})$.}
    \label{fig:clause}
\end{figure}

\paragraph{Negative clause gadgets.} The gadget of each negative clause $C_l=(\seg{x_p} \vee \seg{x_q} \vee \seg{x_t})$ is created symmetrically; see Figure~\ref{fig:clause}(b).

\paragraph{Connection between clauses and variables.} The connection between a positive clause $C_j=(x_p \vee x_q \vee x_t)$ and the gadgets corresponding to the variables $x_p, x_q, x_t$ is established via the set $R'_j$; as shown in Figure~\ref{fig:connection}(a). More precisely, We locate the blue intervals $b_j^{p,q},b_j^{q,t}$ of $C_j$ along $L$, such that 
\begin{itemize}
    \item the two red intervals of $R'_j$ are contained in the interval $r_q^1$ of $x_q$;
    \item the left endpoint of $b_j^{p,q}$ is located inside the interval $r_p^1$ of $x_p$; and
    \item the right endpoint of $b_j^{q,t}$ is located inside the interval $r_t^1$ of $x_t$.
 \end{itemize}
 \begin{figure}[htb]
    \centering
    \includegraphics[width=0.88\textwidth]{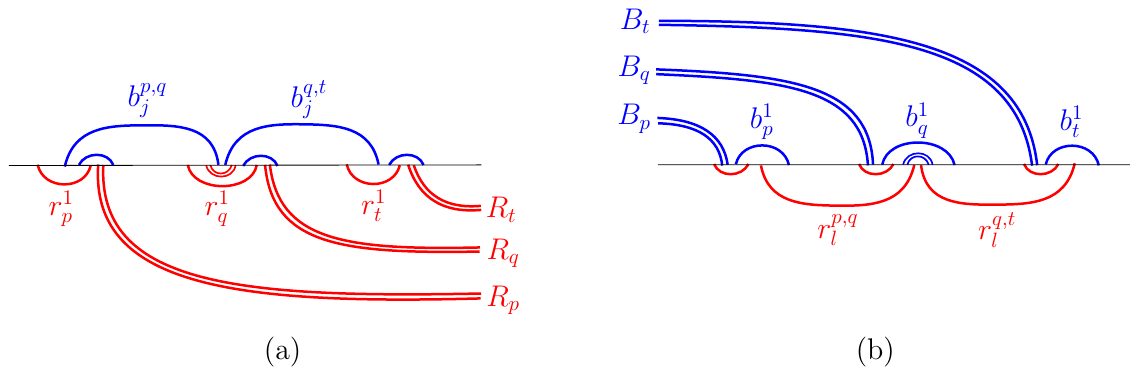}
    \caption{The connections of (a) a positive clause $C_j=(x_p \vee x_q \vee x_t)$ and (b) a negative clause $C_l=(\seg{x_p} \vee \seg{x_q} \vee \seg{x_t})$ with the gadgets of the variables $x_p, x_q, x_t$.}
    \label{fig:connection}
\end{figure}

The connection between a negative clause and the gadgets of its variables is made symmetrically, as shown in Figure~\ref{fig:connection}(b). Figure~\ref{fig:reduction} shows the construction of $\I$ corresponding to the formula in Figure~\ref{fig:PM3SAT}.
\begin{figure}[htb]
    \centering
    \includegraphics[width=0.77\textwidth]{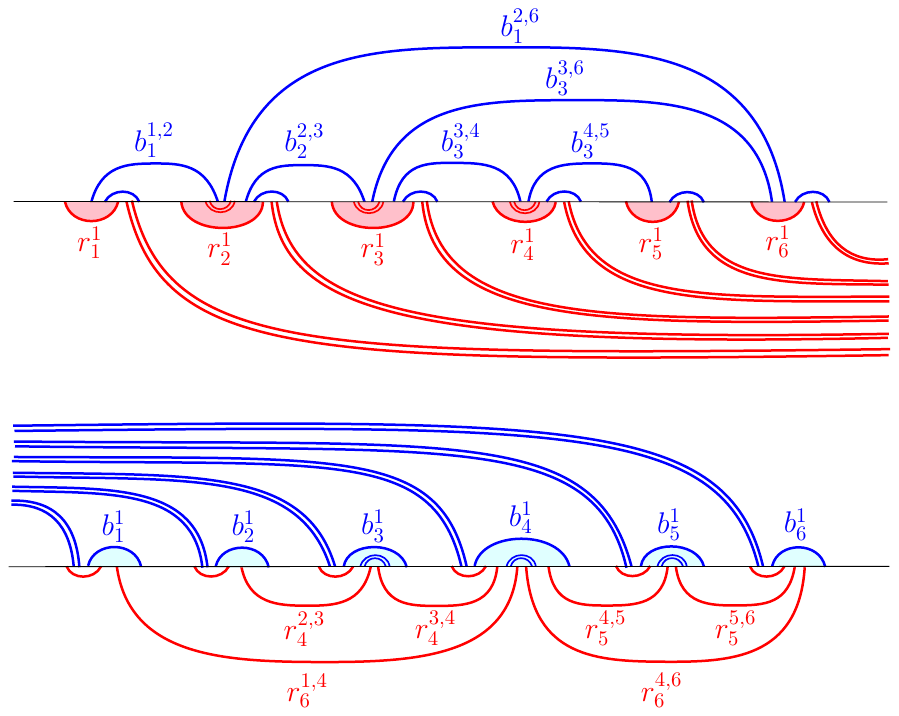}
    \caption{The intervals created by the reduction from the formula in Figure~\ref{fig:PM3SAT}.}
    \label{fig:reduction}
\end{figure}

Clearly, the final set $\I$ contains $10n + 4m$ intervals and is constructed in polynomial time. 
Moreover, the intervals in $\I$ are partitioned into red and blue, and no two intervals of the same color intersect. 
Hence, the intersection graph of the intervals in $\I$ is bipartite.
Let $G$ be the bipartite circle graph corresponding to $\I$.
%

\begin{lemma}\label{lemma:properties}
There exists a minimum dominating set $D$ of $G$ satisfying the following: 
\begin{enumerate}
    \item For each variable $x_i$, $D$ contains one of the triples $\{r_i^1,b_i^4,r_i^5\}$ or $\{b_i^1,r_i^4,b_i^5\}$ from each set $A_i$.
    \item For each positive clause $C_j=(x_p \vee x_q \vee x_t)$, $D$ contains at least one of the intervals $b_j^{p,q}$, $b_j^{q,t}$.
    \item For each negative clause $C_l=(\seg{x_p} \vee \seg{x_q} \vee \seg{x_t})$, $D$ contains at least one of the intervals $r_l^{p,q}$, $r_l^{q,t}$.
\end{enumerate}
\end{lemma}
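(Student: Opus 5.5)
Starting from an arbitrary minimum dominating set $D^*$ of $G$, I will apply local exchanges that keep the size unchanged and preserve domination, until properties 1--3 all hold. First I need a lower bound. The sets $\{r_i^1,b_i^5\}$, $\{r_i^4,b_i^4\}$ and $\{r_i^5,b_i^1\}$ of each gadget, together with the red pair $R'_j$ of each positive clause (and the blue pair of each negative clause), have the following structure. The intervals $r_i^5$ and $b_i^1$ are endpoints of the path in Observation~\ref{obs:var}. Their only neighbours are each other, plus possibly intervals from other gadgets or clauses that meet only $R_i\cup B_i$ or $r_i^1$ and its counterparts. I will check from the placement in Figure~\ref{fig:connection} which of these intervals have outside neighbours. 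Each pair consisting of a leaf-like interval and its unique neighbour forces a distinct element of $D^*$. This gives $|D^*|\ge 3n+m$, and I will aim to show that $k=3n+m$ is tight.

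\textbf{Normalizing each variable gadget.} Fix $i$. By the forcing argument, $D^*$ contains at least three intervals from $A_i$, one from each of $\{r_i^1,b_i^5\}$, $\{r_i^4,b_i^4\}$ and $\{r_i^5,b_i^1\}$. Any element of $D^*\cap (R_i\cup B_i)$ can be swapped for a gadget interval, since the intervals of other gadgets meeting $R_i\cup B_i$ are only the nested $R_j,B_j$. These intervals are dominated anyway once each gadget uses one of the triples: $b_j^4$ or $r_j^4$ crosses them by the nesting of Figure~\ref{fig:variables}, and this is the point to verify. If $D^*\cap A_i$ is one of the two middle sets of Observation~\ref{obs:var}, $\{r_i^4,r_i^5,b_i^5\}$ or $\{b_i^4,b_i^5,r_i^5\}$, I will replace it by a triple. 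Both middle sets contain $b_i^5$, and the clause intervals attach only to $r^1$ (and, symmetrically, $b^1$). The exchange that replaces $b_i^5$ by $r_i^1$ and fixes the rest to $\{r_i^1,b_i^4,r_i^5\}$ preserves domination inside $A_i$ by Observation~\ref{obs:var}. It can only help outside, because $r_i^1$ crosses the positive-clause blue intervals while $b_i^5$ crosses only $r_i^1,r_i^2$. This establishes property~1.

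\textbf{Clause property.} For a positive clause $C_j$, the intervals $r'_j\subset r''_j$ lie inside $r_q^1$. Their only neighbours are $b_j^{p,q}$ and $b_j^{q,t}$, because intervals nested inside $r_q^1$ do not cross $r_q^1$'s neighbours in the gadget, which I will double-check. Hence $r'_j$ must be dominated by itself or by one of $b_j^{p,q},b_j^{q,t}$. If $D^*$ contains $r'_j$ or $r''_j$ instead, I replace it by $b_j^{p,q}$. This still dominates $r'_j$ and $r''_j$, and its other neighbours are not harmed. The count of $m$ clause picks stays the same. Negative clauses are handled symmetrically.

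The main obstacle is checking carefully that each exchange does not undominate an interval elsewhere, in particular the nested $R_i,B_i$ intervals. I will also need to ensure that the exchanges for different gadgets and clauses do not interfere, which I plan to argue by performing them in a fixed order and noting that each one only adds neighbours among the intervals outside its gadget.
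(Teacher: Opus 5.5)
You follow the same exchange route as the paper. Your clause step is fine, and so is your replacement of both middle sets by $\{r_i^1,b_i^4,r_i^5\}$, which is more explicit than the paper's ``one of the triples''.

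\textbf{The gap.} Every later size-preserving exchange rests on $D^*$ containing at least three intervals from each $A_i$, and your forcing argument for this misreads the gadget. The path $(r_i^1,b_i^5,r_i^2,b_i^4,r_i^4,b_i^2,r_i^5,b_i^1)$ has endpoints $r_i^1$ and $b_i^1$, not $r_i^5$. None of $\{r_i^1,b_i^5\}$, $\{r_i^4,b_i^4\}$, $\{r_i^5,b_i^1\}$ consists of a leaf and its unique neighbour:
\begin{itemize}
\item $b_i^5$ is also adjacent to $r_i^2$;
\item $r_i^4$ and $r_i^5$ are both adjacent to $b_i^2$;
\item $r_i^1$ and $b_i^1$ have clause neighbours.
\end{itemize}
Concretely, $\{r_i^2,b_i^2\}$ dominates $b_i^5,b_i^4,r_i^4,r_i^5$ yet meets none of your three pairs. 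So ``one element from each pair'' is false, and $|D^*|\ge 3n+m$ does not follow. Only the clause part of the bound is correct, via $N[r'_j]=\{r'_j,b_j^{p,q},b_j^{q,t}\}$.

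\textbf{Why it matters.} This is exactly the configuration your step ``any element of $D^*\cap(R_i\cup B_i)$ can be swapped for a gadget interval'' must handle. Suppose $D^*\cap A_i=\{r_i^2,b_i^2\}$ and $r_i^1,r_i^3,b_i^3,b_i^1$ are dominated from outside. Then you cannot reach a triple without adding an interval, because $A_i$ alone needs three. What is missing is an argument that such a configuration never saves an interval globally. It should use the twins $r_i^2\subset r_i^3$ and $b_i^2\subset b_i^3$, together with an exact accounting of which outside intervals cross $R_i\cup B_i$.

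That accounting must include clause intervals. The paper's converse argument says positive-clause blue intervals may cross $R_p\cup R_q\cup R_t$. So dropping an $R_i$ interval from $D^*$ can leave a clause interval undominated, contrary to your assumption that clause intervals attach only at $r^1$ and $b^1$.

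For comparison, the paper derives the three-per-gadget count only from the fact that $r_i^4,r_i^5,b_i^4,b_i^5$ have no neighbours outside $A_i$. The $\{r_i^2,b_i^2\}$ example shows this isolation alone does not give it either, so this step genuinely has to be supplied.
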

\begin{proof}
The proof follows from the construction of $\I$.
\begin{enumerate}
    \item For each variable $x_i$, the intervals $r_i^4, r_i^5, b_i^4, b_i^5$ of $A_i$ are independent of the other intervals of $\I$, except for the intervals $r_i^1, r_i^2, r_i^3, b_i^1, b_i^2, b_i^3$. Thus, $D$ must contain three intervals from each set $A_i$, and Observation~\ref{obs:var} still holds. That is, $D$ must contain one of the sets $\{r_i^1,b_i^4,r_i^5\}$, $\{r_i^4,r_i^5,b_i^5\}$, $\{b_i^4,b_i^5,r_i^5\}$, or $\{b_i^1,r_i^4,b_i^5\}$. Therefore, if $D$ contains one of the sets $\{r_i^4,r_i^5,b_i^5\}$ or $\{b_i^4,b_i^5,r_i^5\}$, then we can replace it with one of the sets $\{r_i^1,b_i^4,r_i^5\}$ or $\{b_i^1,r_i^4,b_i^5\}$ without increasing the size of $D$. 
    %
    \item For each positive clause $C_j=(x_p \vee x_q \vee x_t)$, the intervals in the sets $R'_j$ do not intersect any of the other intervals of $\I$ except $b_j^{p,q}$ and $b_j^{q,t}$. Thus, if $D$ contains one of the intervals of $R'_j$, then we can replace it with $b_j^{p,q}$ or with $b_j^{q,t}$ without increasing the size of $D$.
    Therefore, $D$ contains at least one of the intervals $b_j^{p,q}, b_j^{q,t}$ to dominate the intervals of $R'_j$.
    \item For each negative clause $C_l=(\seg{x_p} \vee \seg{x_q} \vee \seg{x_t})$, the intervals in the sets $B'_l$ do not intersect any of the other intervals of $\I$ except $r_l^{p,q}$ and $r_l^{q,t}$. Thus, if $D$ contains one of the intervals of $B'_l$, then we can replace it with $r_l^{p,q}$ or with $r_l^{q,t}$ without increasing the size of $D$. 
    Therefore, $D$ contains at least one of the intervals $r_l^{p,q}, r_l^{q,t}$ to dominate the intervals of $B'_l$.
\end{enumerate}
\end{proof}

It remains to fix the integer $k$. By Lemma~\ref{lemma:properties}, any minimum dominating set of $G$ must contain at least three intervals to dominate each set $A_i$, at least one interval to dominate each set $R'_j$, and at least one interval to dominate each set $B'_l$.  
Thus, we set $k = 3n + m$.

\begin{lemma}
     $\f$ is satisfiable if and only if $G$ has a dominating set of size $k$.
\end{lemma}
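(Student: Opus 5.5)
We prove the two directions separately, using the correspondence fixed after Observation~\ref{obs:var}: the triple $\{r_i^1,b_i^4,r_i^5\}$ encodes $x_i=T$ and $\{b_i^1,r_i^4,b_i^5\}$ encodes $x_i=F$. The main geometric fact we need from the connection step is the following. The clause interval $b_j^{p,q}$ has its left endpoint inside $r_p^1$, while its right part contains $R'_j \subset r_q^1$. Hence $b_j^{p,q}$ crosses $r_p^1$ and also $r_q^1$ (since $r_q^1$ properly contains $R'_j$ but the endpoint of $b_j^{p,q}$ near $x_q$ lies within $r_q^1$). Symmetrically, $b_j^{q,t}$ crosses $r_q^1$ and $r_t^1$. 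For a negative clause, $r_l^{p,q}$ and $r_l^{q,t}$ cross the corresponding $b^1$ intervals. We also need that these clause intervals meet no other chosen interval in a way that matters. The first step is to verify these intersections carefully from Figure~\ref{fig:connection}. We expect this to be the main obstacle, since the text only specifies endpoints; in particular we must check that $b_j^{p,q}$ really crosses $r_q^1$ rather than being nested in it.

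($\Rightarrow$) Given a satisfying assignment, put into $D$ the true- or false-triple of each $A_i$; this gives $3n$ intervals dominating every $A_i$ by Observation~\ref{obs:var}. For each positive clause $C_j$, pick a true literal $x_s$ with $s\in\{p,q,t\}$. Add $b_j^{p,q}$ if $s\in\{p,q\}$, and add $b_j^{q,t}$ otherwise. This interval dominates $R'_j$, and it is dominated by $r_s^1\in D$ via the crossing noted above; here $r_s^1\in D$ because $x_s=T$. The other clause interval, say $b_j^{q,t}$, must also be dominated. It crosses $r_q^1$ and $r_t^1$, but neither need be in $D$. This is a genuine gap: it must be covered either by another intersection (for instance, with intervals of $R'_j$ — but those are not in $D$) or by the construction guaranteeing an intersection with some interval always chosen. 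We would resolve it by checking whether $b_j^{q,t}$ also crosses an always-present interval, such as $r_q^5$ or $r$-intervals of intermediate gadgets nested between $x_q$ and $x_t$. Failing that, we would argue from the figure that $b_j^{q,t}$ crosses $r_p^1$, $r_q^1$ or $r_t^1$ in a way that covers all cases. Negative clauses are handled symmetrically, using the $b^1$ intervals and false literals. This gives $|D|=3n+m=k$.

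($\Leftarrow$) Let $D$ be a dominating set of size $k$. By the counting before the lemma, $D$ is minimum. Apply Lemma~\ref{lemma:properties}, invoking the exchange arguments in its proof to normalize $D$ without changing its size. Then $D$ has exactly three intervals per $A_i$, forming one of the two triples, and exactly one clause interval per clause; the counts are forced to be exact because $k=3n+m$. Define $x_i=T$ iff $r_i^1\in D$. Consider a positive clause $C_j$. Exactly one of $b_j^{p,q}, b_j^{q,t}$ is in $D$; say $b_j^{q,t}\notin D$. Then $b_j^{q,t}$ must be dominated by some red interval of $D$. The only red intervals it meets are $r_q^1$, $r_t^1$ and those of $R'_j$, and the latter are not in $D$. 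We also have to rule out the nested $R$ intervals, which Lemma~\ref{lemma:properties} excludes from $D$. Hence $r_q^1$ or $r_t^1$ is in $D$, so $x_q$ or $x_t$ is true and $C_j$ is satisfied. The case $b_j^{p,q}\notin D$ is symmetric, and negative clauses are handled symmetrically using blue $b^1$ intervals, giving false literals. Thus $\f$ is satisfiable.
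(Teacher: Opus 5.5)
Your ($\Leftarrow$) direction is essentially the paper's argument: normalize $D$ via Lemma~\ref{lemma:properties}, read the assignment off the triples, and note that an omitted clause interval can only be dominated by an $r^1$ interval of one of its variables. Your ($\Rightarrow$) direction, however, has a genuine gap, and it comes from adding the wrong clause interval. When the true literal is $x_p$, you add $b_j^{p,q}$ and then ask that it be dominated by $r_p^1$. But an interval placed in $D$ needs no dominator; what must be dominated is the clause interval you leave out.

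Under your rule, take $x_p=T$ and $x_q=x_t=F$. The omitted interval $b_j^{q,t}$ can be dominated only by $r_q^1$ or $r_t^1$, since the intervals of $R'_j$ and of $R_p\cup R_q\cup R_t$ are never chosen, and neither $r_q^1$ nor $r_t^1$ is in $D$. This is exactly what your own ($\Leftarrow$) argument proves: if $b_j^{q,t}\notin D$, then $r_q^1$ or $r_t^1$ must lie in $D$. So the repair you sketch, looking for some always-chosen interval crossing $b_j^{q,t}$, cannot succeed. The symmetric failure occurs when only $x_t$ is true.

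The fix is to reverse the choice, as the paper does. If $x_p=T$, add $b_j^{q,t}$; if $x_t=T$, add $b_j^{p,q}$; if $x_q=T$, add either one. The added interval dominates both intervals of $R'_j$, each of which crosses both clause intervals. The omitted interval is dominated by the true literal's $r^1$ interval, because $r_p^1$ crosses $b_j^{p,q}$, $r_t^1$ crosses $b_j^{q,t}$, and $r_q^1$ crosses both. Negative clauses are handled symmetrically with the $b^1$ intervals and false literals.

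The crossing facts you single out as the main obstacle are not where the difficulty lies; the paper reads them directly off the construction. Each interval of $R'_j$ starts inside $b_j^{p,q}$, ends inside $b_j^{q,t}$, and lies in $r_q^1$. Since $r_q^1$ is disjoint from $r_p^1$ and $r_t^1$ in the layout, $r_q^1$ starts inside $b_j^{p,q}$ and ends inside $b_j^{q,t}$, which gives all the needed crossings.
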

\begin{proof}
Suppose that $\f$ is satisfiable, and let $\A$ be a truth assignment for $\f$. Since $\A$ satisfies $\f$, each clause contains at least one literal with a true value.
We construct a dominating set $D$ of $ G$ as follows.
\begin{enumerate}
    \item[(1)] For each variable $x_i$, such that $x_i = T$, we add the intervals $r_i^1$, $b_i^4$, and $r_i^5$ to $D$. These intervals dominate all the intervals of the set $A_i$. 
    \item[(2)]For each variable $x_i$, such that $x_i = F$, we add the intervals $b_i^1$, $r_i^4$, and $b_i^5$ to $D$. These intervals dominate all the intervals of the set $A_i$. 
    \item[(3)] For each positive clause $C_j=(x_p \vee x_q \vee x_t)$, at least one of the literals $x_p$, $x_q$, and $x_t$ is assigned true (if there is more than one, we select an arbitrary one). Assume, w.l.o.g., that $x_p =T$.  Thus, $r_p^1 \in D$. We add the interval $b_j^{q,t}$ to $D$. Therefore, $r_p^1$ dominates $b_j^{p,q}$, and $b_j^{q,t}$ dominates the two intervals of the set $R'_j$. (If $x_q = T$, then $r_q^1 \in D$ dominates both $b_j^{p,q}$ and $b_j^{q,t}$. In this case, it does not matter which one of them is added to $D$.) 
    \item[(4)] For each negative clause $C_l=(\seg{x_p} \vee \seg{x_q} \vee \seg{x_t})$, at least one of the literals $\seg{x_p}$, $\seg{x_q}$, and $\seg{x_t}$ is assigned true (if there is more than one, we select an arbitrary one). Assume, w.l.o.g., that $\seg{x_p} =T$. Thus, $b_p^1 \in D$. We add the interval $r_l^{q,t}$ to $D$. Therefore, $b_p^1$ dominates $r_l^{p,q}$, and $r_l^{q,t}$ dominates the two intervals of the set $B'_l$. (If $\seg{x_q} = T$, then $b_q^1 \in D$ dominates both $r_l^{p,q}$ and $r_l^{q,t}$. In this case, it does not matter which one of them is added to $D$.)
\end{enumerate}
Clearly, $D$ contains $k = 3n + m$ intervals. Moreover, by (1) and (2), $D$ dominates all the intervals in the sets $A_1, A_2,\dots, A_n$, by (1) and (3), $D$ dominates all the intervals in each gadget corresponding to a positive clause, and by (2) and (4), $D$ dominates all the intervals in each gadget corresponding to a negative clause. Thus, $D$ dominates all the intervals in $\I$, and therefore, $G$ has a dominating set of size $k$.

\medskip
Conversely, suppose that $G$ has a dominating set $D$ of size $k$ satisfying the properties in Lemma~\ref{lemma:properties}. Thus, $D$ contains at least three intervals from each set $A_i$ corresponding to a variable $x_i$, at least one interval from each pair $b_j^{p,q}, b_j^{q,t}$ corresponding to a positive clause $C_j$, and at least one interval from each pair $r_l^{p,q}, r_l^{q,t}$ corresponding to a negative clause $C_l$. 
Since $k = 3n + m$, $D$ contains exactly one of the triples $\{r_i^1,b_i^4,r_i^5\}$ or $\{b_i^1,r_i^4,b_i^5\}$ from each set $A_i$ corresponding to a variable $x_i$, exactly one interval from each pair $b_j^{p,q}, b_j^{q,t}$ corresponding to a positive clause $C_j$, and exactly one interval from each pair $r_l^{p,q}, r_l^{q,t}$ corresponding to a negative clause $C_l$. 
We assign $x_i = T$ if $D$ contains the triple $\{r_i^1,b_i^4,r_i^5\}$; otherwise, we assign $x_i = F$.

We show that this assignment satisfies $\f$. 
For each positive clause $C_j=(x_p \vee x_q \vee x_t)$, since exactly one of the two (blue) intervals $b_j^{p,q}, b_j^{q,t}$ is contained in $D$, the other interval must be dominated by some interval of $D$.  
By the construction, the only intervals outside the clause gadget that can intersect $b_j^{p,q}, b_j^{q,t}$ are the intervals $r_p^1$, $r_q^1$, $r_t^1$, and intervals from $R_p \cup R_q \cup R_t$. 
Moreover, by Observation~\ref{obs:var} and Lemma~\ref{lemma:properties}, no interval from $R_p \cup R_q \cup R_t$ belongs to $D$.
Thus, $D$ contains at least one of the intervals $r_p^1$, $r_q^1$, or $r_t^1$, and the variable corresponding to this interval is assigned true, which satisfies the clause $C_j$. 
Each negative clause is satisfied due to a similar argument. Therefore, $\f$ is satisfiable.  
\end{proof}

The following theorem summarizes the result of this section.
\begin{theorem} \label{thm:hardness}
The MDS problem on bipartite circle graphs is NP-hard.
\end{theorem}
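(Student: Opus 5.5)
The theorem follows by assembling the pieces of the reduction built above. The source problem is Planar Monotone 3-SAT, which is NP-hard~\cite{deBerg2012}, and we may assume the input formula $\f$ comes with a monotone rectilinear representation (the hardness result is stated for formulas given this way). From this representation we build the bicolored interval set $\I$ and set $k=3n+m$. The theorem then needs three facts: the construction is polynomial, the output really is a bipartite circle graph, and the reduction preserves yes/no answers.

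First, I would check that the construction is polynomial and that the output lies in the right graph class.
\begin{itemize}
\item $\I$ has exactly $10n+4m$ intervals.
\item The endpoint positions can be fixed by a left-to-right sweep over the rectilinear representation. The nesting $R_n\subset\dots\subset R_1$ and $B_1\subset\dots\subset B_n$ follows the variable order along $l$. The clause intervals follow the vertical edges, and their nesting follows the clause rectangles. Planarity of the representation ensures these nestings can be realized without unwanted crossings.
\item By construction, no two intervals of the same color overlap. Hence the interval model defines a circle graph whose two color classes are independent sets, so $G$ is a bipartite circle graph.
\item The interval model converts to a chord model, or to the graph itself, in linear time~\cite{Gavril1973}.
\end{itemize}

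Second, I would invoke the preceding lemma: $\f$ is satisfiable if and only if $G$ has a dominating set of size $k$. The decision version asks for a dominating set of size at most $k$. To close the gap between ``exactly $k$'' and ``at most $k$'', I would use Lemma~\ref{lemma:properties} and the counting argument behind the choice of $k$. Every dominating set needs at least three intervals per variable gadget and at least one per clause gadget, and these requirements come from disjoint groups of intervals. So no dominating set has fewer than $k$ intervals, and ``size at most $k$'' is equivalent to ``size exactly $k$''. A polynomial algorithm for MDS on bipartite circle graphs would therefore decide Planar Monotone 3-SAT, which proves NP-hardness.

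The main obstacle is the geometric realizability claim in the first step. We need every interval to meet only its intended partners: the blue clause intervals should overlap only $r_p^1$, $r_q^1$, $r_t^1$, possibly $R$-sets, and their own $R'_j$, and never other red intervals of another gadget. I would argue this by induction over the nesting depth of the clause rectangles in the rectilinear representation. An outer clause's blue intervals span the inner clauses' blue intervals without crossing them, because equal-colored intervals may nest. An outer clause's blue intervals also contain the inner clauses' red intervals $R'$ entirely, so containment rather than overlap means no edge is created.
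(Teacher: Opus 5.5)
Your outline follows the paper's own route: the same gadgets, the same $k=3n+m$, and the same equivalence lemma. The step you add, however, is false, and it is the step the converse direction of that lemma also relies on.

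\textbf{The ``at most $k$'' step needs no lower bound.} A superset of a dominating set is dominating, so any smaller dominating set can be padded up to size exactly $k$.

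\textbf{The lower bound fails.} Your claim is that \emph{every} dominating set contains at least three intervals of each $A_i$. Lemma~\ref{lemma:properties} asserts the same thing (``$D$ must contain three intervals from each set $A_i$, and Observation~\ref{obs:var} still holds'') without justification, so citing it does not help. The only intervals of $A_i$ whose closed neighbourhoods lie inside $A_i$ are $r_i^4,r_i^5,b_i^4,b_i^5$, and the two intervals $r_i^2,b_i^2$ already dominate all four. The rest of $A_i$ can be dominated from outside the gadget:
\begin{itemize}
\item $r_i^1$ and $b_i^1$ are crossed by clause intervals.
\item The twins $r_i^3,b_i^3$ are crossed by $B$/$R$ intervals of other gadgets. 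The nesting $R_n\subset\dots\subset R_1$, $B_1\subset\dots\subset B_n$, together with the stated independence of $b_j^4$, forces $r_j^4\subset r_i^2$ for $i<j$. So $b_j^2$ and $b_j^3$ start inside $r_i^2$ and contain $b_i^2$, whose left end lies outside $r_i^3$. Hence they cross both $r_i^2$ and $r_i^3$.
\end{itemize}

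\textbf{A concrete counterexample.} Take $\{r_i^2,b_i^2 : 1\le i\le n\}\cup\{b_n^4,r_1^4\}$ together with both wire intervals of every clause. Then:
\begin{itemize}
\item $b_n^2$ dominates $r_i^3$ for $i<n$, and $r_1^2$ dominates $b_j^3$ for $j>1$.
\item $b_n^4$ dominates $r_n^3$, and $r_1^4$ dominates $b_1^3$.
\item The clause intervals dominate every $R'_j$, every $B'_l$, and every $r_i^1,b_i^1$, provided each variable occurs in some positive and some negative clause.
\end{itemize}
This dominating set has size $2n+2m+2$, which is less than $3n+m$ whenever $m<n-2$.

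Now take any unsatisfiable instance and remove pure literals. Pad it with disjoint blocks $(y_1\vee y_2\vee y_3)$, $(\seg{y_1}\vee\seg{y_2}\vee\seg{y_3})$. This keeps the instance unsatisfiable, planar and monotone, and raises $n-m$ by one per block. So the construction as described answers ``yes'' on unsatisfiable formulas, and the converse direction of the equivalence lemma breaks.

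Geometric realizability, which you single out as the main obstacle, is not where the difficulty lies. What is missing is an argument, and most likely a modified variable gadget, forcing every dominating set of size $k$ to use one of the two triples of Observation~\ref{obs:var} in every $A_i$.
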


\section{Approximation Algorithm} \label{sec:approxAlg}
Let $\I=\I_R \cup \I_B$ be a set of intervals arranged along a horizontal line $L$ and corresponding to a bipartite circle graph $G$, such that $\I_R$ is a set of $n$ red intervals, and $\I_B$ is a set of $m$ blue intervals. 
Let $D^*$ be a minimum dominating set of $G$, and let $|D^*|$ denote its size. 
In this section, we present a polynomial-time approximation algorithm that computes a dominating set $D$ of $G$, such that $|D| \le 2\cdot|D^*|$. 
The idea of the algorithm is to compute a minimum dominating set $R$ that dominates all the blue intervals of $\I_B$ and a minimum dominating set $B$ that dominates all the red intervals of $\I_R$, and to take $D = R \cup B$. In the following, we show how to compute the set $R$. Computing the set $B$ is symmetric.

\subsection{Computing the set $\boldsymbol{R}$} \label{sec:R}
In this section, we present a dynamic programming algorithm that computes a minimum dominating set $R$ of the blue intervals of $\I_B$. 
Notice that if there are blue intervals that do not intersect any red interval from $I_R$, then they must be in any dominating set of $\I_B$. 
Therefore, we may assume that each blue interval intersects at least one red interval, and thus $R$ contains only red intervals.

Let $N=n+m$, and let $\P=\{p_1, p_2, \dots, p_{2N}\}$ be the set of endpoints of the intervals in $\I$ listed from left to right.
We assume, w.l.o.g., that the points in $\P$ are distinct and located on consecutive integers $1,2,\dots,2N$. 
For an interval $I \in \I$, let $s(I)$ and $e(I)$ denote the starting and ending points (i.e., the left and right endpoints) of $I$, respectively, and let $C(I)$ denote the set of intervals in $\I$ that intersect $I$, i.e., $C(I) = \{I' \in \I:s(I') < s(I) < e(I') < e(I) \text{ or } s(I) < s(I') < e(I) < e(I')\}$.

For each $1 \le i \le j \le 2N$, let $\P[i,j]=\{p_i,p_{i+1},\dots,p_j\}$ be the subset of $\P$ containing the points between $p_i$ and $p_j$. 
Let $\I_B[i,j]$ denote the set of blue intervals in $\I_B$ that have both endpoints in $\P[i,j]$, i.e., $\I_B[i,j] = \{b \in \I_B:i \le s(b) < e(b) \le j\}$.
Let $\I^+_R[i,j]$ denote the set of red intervals in $\I_R$ whose ending points are in $\P[i,j]$, i.e., $\I^+_R[i,j] = \{r \in \I_R:i \le e(r) \le j\}$.

\old{
Let $b_1, b_2, \dots, b_m$ be the blue intervals in $\I_B$ sorted from left to right, such that $s(b_1) < s(b_2) < \dots < s(b_m)$.
Similarly, let $r_1, r_2, \dots, r_n$ be the blue intervals in $\I_R$ sorted from left to right, such that $s(r_1) < s(r_2) < \dots < s(r_n)$.
Let $R$ be a minimum dominating set of $\I_B$, and, for each $1 \le i \le j \le 2N$, let $R[i,j]$ be the subset of $R$ dominating $\I_B[i,j]$.
Since $R$ is a dominating set of $\I_B$, $R \cap C(b_1) \neq \emptyset$. 
Let $r_k$ be the leftmost (red) interval in $R \cap C(b_1)$ and consider the blue intervals in $\I_B \setminus C(r_k)$. We distinguish between two cases.
\medskip
\\ {\bf Case~1:} $\boldsymbol{s(b_1) < s(r_k) < e(b_1)}$. 
In this case, the blue intervals in $\I_B \setminus C(r_k)$ are partitioned into three disjoint sets: $\I_B[s(b_1)+1, s(r_k)-1]$, $\I_B[s(r_k)+1,e(r_k)-1]$, and $\I_B[e(r_k)+1,2N]$; see Figure~\ref{fig:DP1}. 
Clearly, since the red intervals in $\I_R$ do not intersect each other, we have 
\[
\begin{aligned}
R[s(b_1)+1, s(r_k)-1] \cap R[s(r_k)+1,e(r_k)-1] = \emptyset
\end{aligned}
\]
and 
\[
\begin{aligned}
R[s(r_k)+1,e(r_k)-1] \cap R[e(r_k)+1,2N] = \emptyset.
\end{aligned}
\]
Moreover, since $r_k$ is the leftmost interval in $R \cap C(b_1)$, we also have 
\[
\begin{aligned}
R[s(b_1)+1, s(r_k)-1] \cap R[e(r_k)+1,2N] = \emptyset.
\end{aligned}
\]
However, there might be an interval $r \in R[e(r_k)+1,2N]$, such that $s(r) < s(b_1)$ and $e(r) > e(r_k)$, i.e., $r$ starts before $b_1$ and ends after $r_k$. 
Hence, each interval $r \in R[s(b_1)+1, s(r_k)-1]$ has $e(r) \in \P[s(b_1)+1, s(r_k)-1]$, each interval $r \in R[s(r_k)+1,e(r_k)-1]$ has $e(r) \in \P[s(r_k)+1,e(r_k)-1]$, and each interval $r \in R[e(r_k)+1,2N]$ has $e(r) \in \P[e(r_k)+1,2N]$. 
Thus, the sets $R[s(b_1)+1, s(r_k)-1]$, $R[s(r_k)+1,e(r_k)-1]$, and $R[e(r_k)+1,2N]$ are disjoint,  
\[
\begin{aligned}
R[s(b_1)+1, s(r_k)-1] & \subseteq \I^+_R[s(b_1)+1, s(r_k)-1], \\
R[s(r_k)+1,e(r_k)-1] & \subseteq \I^+_R[s(r_k)+1,e(r_k)-1], \text{ and} \\
R[e(r_k)+1,2N] & \subseteq \I^+_R[e(r_k)+1,2N],
\end{aligned}
\]
and, 
\[
\begin{aligned}
O[s(b_1)+1, s(r_k)-1] & = |R[s(b_1)+1, s(r_k)-1]|, \\
O[s(r_k)+1,e(r_k)-1] & = |R[s(r_k)+1,e(r_k)-1]|, \text{ and} \\
O[e(r_k)+1,2N] & = |R[e(r_k)+1,2N]|,
\end{aligned}
\]
and therefore, 
\[
\begin{aligned}
|R| = O[1,2N] = O[s(b_1)+1, s(r_k)-1]   + O[s(r_k)+1,e(r_k)-1] + O[e(r_k)+1,2N] + 1 .
\end{aligned}
\]
\begin{figure}[htb]
    \centering
    \includegraphics[width=0.95\textwidth]{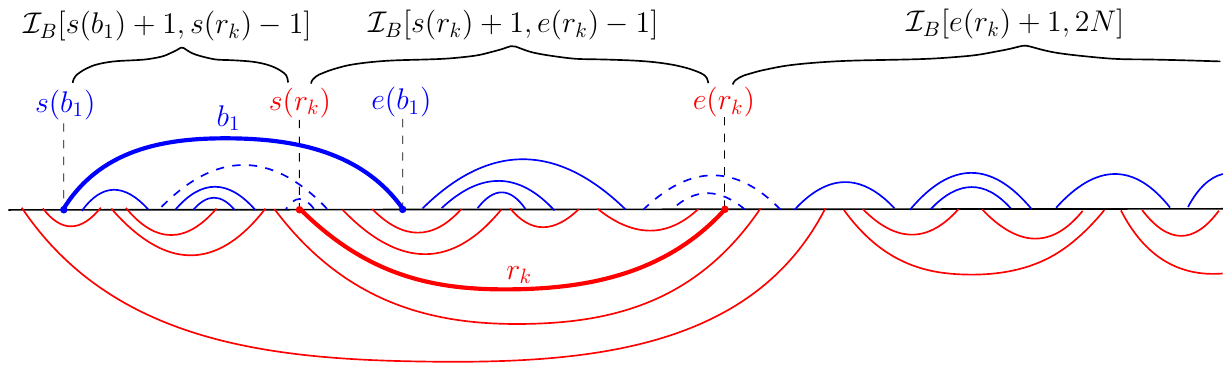}
    \caption{Case~1: $s(b_1) < s(r_k) < e(b_1)$. $r_k$ partitions the blue intervals in $\I_B \setminus C(r_k)$ into three disjoint sets: $\I_B[s(b_1)+1, s(r_k)-1]$, $\I_B[s(r_k)+1,e(r_k)-1]$, and $\I_B[e(r_k)+1,2N]$.}
    \label{fig:DP1}
\end{figure}

If $e(r_k) = 2N$, then the blue intervals in $\I_B \setminus C(r_k)$ are partitioned into two disjoint sets: $\I_B[s(b_1)+1, s(r_k)-1]$ and $\I_B[s(r_k)+1,e(r_k)-1]$. 
In this case, we have 
\[
\begin{aligned}
|R| = O[1,2N] = O[s(b_1)+1, s(r_k)-1] + O[s(r_k)+1,e(r_k)-1] + 1 .
\end{aligned}
\]
%
\\ {\bf Case~2:} $\boldsymbol{s(b_1) < e(r_k) < e(b_1)}$. 
In this case, the blue intervals in $\I_B \setminus C(r_k)$ are partitioned into two disjoint sets: $\I_B[s(b_1)+1, e(r_k)-1]$ and $\I_B[e(r_k)+1,2N]$; see Figure~\ref{fig:DP2}. 
As in Case~1, the red intervals in $\I_R$ do not intersect each other, we have 
$R[s(b_1)+1, e(r_k)-1] \cap R[e(r_k)+1,2N] = \emptyset$, and there might be an interval $r \in R[e(r_k)+1,2N]$, such that $s(r) < s(r_k)$ and $e(r) > e(r_k)$, i.e., $r$ starts before $r_k$ and ends after $r_k$.
Hence, each interval $r \in R[s(b_1)+1, e(r_k)-1]$ has $e(r) \in \P[s(b_1)+1, e(r_k)-1]$ and each interval $r \in R[e(r_k)+1,2N]$ has $e(r) \in \P[e(r_k)+1,2N]$.
Thus, the sets $R[s(b_1)+1, e(r_k)-1]$ and $R[e(r_k)+1,2N]$ are disjoint, 
\[
\begin{aligned}
R[s(b_1)+1, e(r_k)-1] & \subseteq \I^+_R[s(b_1)+1, e(r_k)-1], \text{ and} \\
R[e(r_k)+1,2N] & \subseteq \I^+_R[e(r_k)+1,2N].
\end{aligned}
\]
and, 
\[
\begin{aligned}
O[s(b_1)+1, e(r_k)-1] & = |R[s(b_1)+1, s(r_k)-1]|, \text{ and} \\
O[e(r_k)+1,2N] & = |R[e(r_k)+1,2N]|,
\end{aligned}
\]
and therefore, 
\[
\begin{aligned}
|R| = O[1,2N] = O[s(b_1)+1, e(r_k)-1]  + O[e(r_k)+1,2N] + 1 .
\end{aligned}
\]
\begin{figure}[htb]
    \centering
    \includegraphics[width=0.95\textwidth]{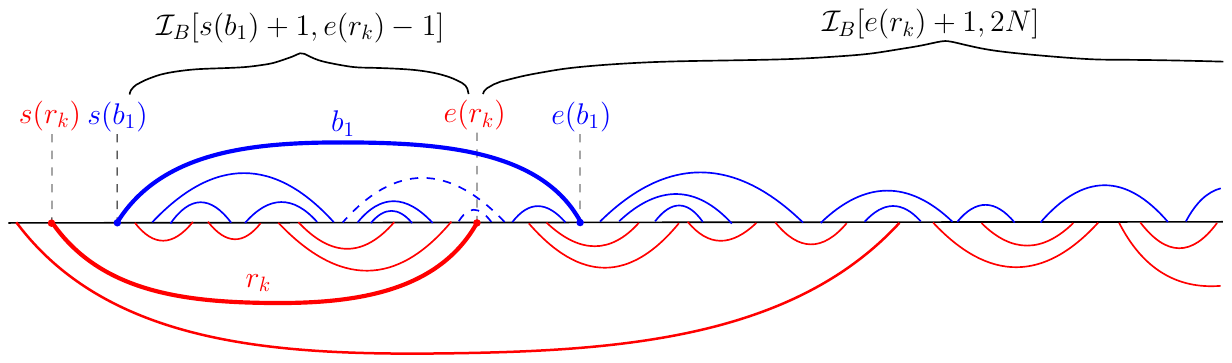}
    \caption{Case~2: $s(b_1) < e(r_k) < e(b_1)$. $r_k$ partitions the blue intervals in $\I_B \setminus C(r_k)$ into two disjoint sets: $\I_B[s(b_1)+1, e(r_k)-1]$ and $\I_B[e(r_k)+1,2N]$.}
    \label{fig:DP2}
\end{figure}

Notice that, in both cases, the subproblems obtained by $r_k$ are independent. 
Therefore, in order to compute $|R| = O[1,2N]$, for each interval $r_k \in C(b_1)$, we compute $O[s(b_1)+1, s(r_k)-1] + O[s(r_k)+1,e(r_k)-1] + O[e(r_k)+1,2N] + 1$, in case $s(b_1) < s(r_k) < e(b_1)$, or $O[s(b_1)+1, e(r_k)-1] + O[e(r_k)+1,2N] + 1$, in case $s(b_1) < e(r_k) < e(b_1)$, and take the minimum over these values.
}

\paragraph{Subproblems.}
For each $1 \le i \le j \le 2N$, a subproblem is defined on the points in $\P[i,j]$. 
More precisely, a subproblem is to compute the minimum number of red intervals from $\I^+_R[i,j]$ that dominate the blue intervals in $\I_B[i,j]$.
For each subproblem $\P[i,j]$, we define $O[i,j]$ to be the minimum number of red intervals from $\I^+_R[i,j]$ that dominate the blue intervals in $\I_B[i,j]$. 
If $i=j$ or $\I_B[i,j] = \emptyset$, then we set $O[i,j] = 0$.   

Let $b=LB[i,j]$ be the blue interval in $\I_B[i,j]$ having the leftmost starting point $s(b)$. 
For each $r \in C(b)$ such that $e(r) \le j$, we define a value $f(r)$ as follows:

\begin{itemize}
    \item \textbf{If} $s(b) < s(r) < e(b)$, \textbf{then} 
        \begin{itemize}
            \item[] \textbf{if} $e(r) < j$, \textbf{then}
                \[
                    \begin{aligned}
                    f(r) = O[s(b)+1, s(r)-1] + 
                              O[s(r)+1,e(r)-1] +  
                              O[e(r)+1,j] + 1,
                    \end{aligned}
                \]
                \item[] \textbf{else  \  ($e(r) = j$)}
                \[
                    \begin{aligned}
                        f(r) = O[s(b)+1, s(r)-1] +  
                          O[s(r)+1,e(r)-1]+1.   
                    \end{aligned}
    \]
        \end{itemize}

    \item \textbf{If} $s(b) < e(r) < e(b)$, \textbf{then} 
    \[
        \begin{aligned}
            f(r) = O[s(b)+1, e(r)-1] +    
                     O[e(r)+1,j] + 1.
        \end{aligned}
    \]
\end{itemize}

The following lemma establishes the optimal substructure property of $O[i,j]$. 
\begin{lemma} \label{lemma:DP}
For each subproblem $\P[i,j]$, $O[i,j]$ satisfies the following recurrence:
\[
O[i,j] =  
\begin{cases} 
\quad 0  &:\text{if } i=j \text{ or }  \I_B[i,j] = \emptyset; \\
\underset{\underset{ e(r) \le j}{r \in C(LB[i,j])}}{\min} f(r)  &:\text{otherwise.} 
\end{cases}
\]
\end{lemma}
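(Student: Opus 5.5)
We prove the recurrence by showing two inequalities, $O[i,j] \le \min f(r)$ and $O[i,j] \ge \min f(r)$, for the nontrivial case where $\I_B[i,j]\neq\emptyset$. The base case is immediate: if $i=j$ or $\I_B[i,j]=\emptyset$, the empty set suffices. Throughout, let $b=LB[i,j]$. The key structural facts are three:
\begin{itemize}
\item red intervals are pairwise non-crossing, so any two are either disjoint or nested;
\item $b$ has the leftmost starting point in $\I_B[i,j]$, so every blue interval of $\I_B[i,j]$ starts in $[s(b),j]$;
\item every red interval we use lies in $\I^+_R[i,j]$, i.e.\ its right endpoint is in $[i,j]$.
\end{itemize}
We will also use the fact that endpoints are distinct integers.

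For the upper bound ($\le$), fix $r\in C(b)$ with $e(r)\le j$. Take optimal solutions of the subproblems appearing in $f(r)$ and add $r$. Each subproblem solution lies in $\I^+_R$ of a subrange of $[i,j]$, and so does $r$ itself: since $e(r)>s(b)\ge i$, we have $r\in\I^+_R[i,j]$. It remains to check that every $b'\in\I_B[i,j]$ is dominated by this union. Either $b'$ crosses $r$, or $b'$ does not cross $r$. In the latter case $b'$ is disjoint from $r$ or nested with it. Since $s(b')\ge s(b)$, we split by cases on which of the gaps determined by the endpoints of $b$ and $r$ contains the endpoints of $b'$. Non-crossing with $r$ then forces $b'$ to lie entirely in one of the ranges $[s(b)+1,s(r)-1]$, $[s(r)+1,e(r)-1]$, $[e(r)+1,j]$ (in Case 1), or $[s(b)+1,e(r)-1]$, $[e(r)+1,j]$ (in Case 2). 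The point here is that if $b'$ contains $r$, then $b'$ starts before $s(r)$. In Case 2 it would start before $e(r)$ and end after it, but $r$ starts before $s(b)\le s(b')$, so $b'$ would cross $r$. In Case 1, $b'$ would need $s(b')<s(r)$ and $e(b')>e(r)$; since $b'$ and $b$ do not cross, this forces $b'$ to be contained in $b$ or to contain it, and we must rule this out or handle it. \emph{This containment situation is the main obstacle.} I would argue via the fact that a blue $b'$ containing $r$ with $s(b')\in[s(b),s(r))$ would have to contain $e(b)$, or else be nested inside $b$. Blue intervals do not cross, so $b'\subset b$, but then $e(r)>e(b)>e(b')$, a contradiction. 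Hence no such $b'$ exists and the case analysis closes. Since $b'$ lies in one of these ranges, it is dominated by the corresponding subproblem solution.

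For the lower bound ($\ge$), take an optimal set $S\subseteq\I^+_R[i,j]$ dominating $\I_B[i,j]$. Some $r\in S$ crosses $b$, and for it $e(r)\le j$. Since $r$ crosses $b$, exactly one of its endpoints is inside $b$, which selects the case defining $f(r)$. We then partition $S\setminus\{r\}$ according to which range contains each interval's right endpoint. Any red interval in $S$ has its right endpoint in $[i,j]$ and different from $e(r)$. If that right endpoint falls in $[i,s(b)]$, or in a gap not corresponding to a subproblem, the interval cannot dominate any blue interval of the subranges. We need to make sure dominators of a subrange's blue intervals actually have their right endpoints in that subrange. A red $r'$ crossing $b'\subseteq[s(r)+1,e(r)-1]$ must be nested inside $r$, since reds do not cross, so $e(r')$ is in the range. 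For $b'$ in $[s(b)+1,s(r)-1]$, a crossing $r'$ has $e(r')$ in that range or $e(r')>e(b')$. In the latter case $r'$ contains $s(r)$, hence contains $r$, and then $r'$ also crosses $b$ (it starts before $s(b')$ and contains $e(b)$). Here I would pick $r$ as the leftmost-ending, or rather outermost-appropriate, crossing interval of $b$ in $S$ to exclude this, exchanging if necessary. With this choice, each part of $S\setminus\{r\}$ is feasible for its subproblem, which gives $|S|\ge f(r)\ge\min f$.
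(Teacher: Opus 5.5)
Your overall structure matches the paper's. You fix $b=LB[i,j]$, pick a red $r\in C(b)$, and split the remaining blue intervals into the independent ranges that appear in $f(r)$. Your upper bound $O[i,j]\le f(r)$ is correct, and separating the two inequalities is cleaner than the paper's one-pass argument.

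\textbf{The gap is in the lower bound}, exactly at the step you flag.
\begin{itemize}
\item You never commit to a choice of $r\in S\cap C(b)$, and the first rule you propose, leftmost-ending, fails. The interval that breaks the partition is a red $r'\in S$ crossing some $b'\in\I_B[s(b)+1,s(r)-1]$ with $e(r')>s(r)$. Such an $r'$ contains $r$, so it ends \emph{after} $r$; no rule based on right endpoints can exclude it.
\item ``Outermost'' is not well defined as stated. The reds crossing $b$ form two families: those containing $s(b)$ and those containing $e(b)$. A member of one family is disjoint from every member of the other. Within the family containing $e(b)$, outermost does coincide with leftmost start, which is what you need.
\item ``Exchanging if necessary'' is not needed.
\end{itemize}

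\textbf{The missing step} is the paper's ``leftmost'' device, applied inside $S$. Let $r$ be the interval of $S\cap C(b)$ with the leftmost \emph{starting} point.
\begin{itemize}
\item If $s(r)<s(b)$: every red of $S$ crossing a blue of $\I_B[s(b)+1,e(r)-1]$ is nested in $r$ and ends after $s(b)$. Every red of $S$ crossing a blue of $\I_B[e(r)+1,j]$ ends in $(e(r),j]$. So no interval straddles two ranges.
\item If $s(b)<s(r)<e(b)$: an offending $r'$ starts inside $b'$ (not before $s(b')$, as you wrote). Hence
\[
s(b)<s(b')<s(r')<e(b')<s(r)<e(b)<e(r)<e(r'),
\]
so $r'\in S\cap C(b)$ with $s(r')<s(r)$, contradicting the choice of $r$.
\end{itemize}

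Also fix the dichotomy for a red $r'$ crossing such a $b'$. The correct split is $e(r')<s(r)$ (right endpoint already in the range) versus $e(r')>s(r)$ (then $r'$ contains $s(r)$, hence contains $r$). The condition $e(r')>e(b')$ alone does not force $r'$ to contain $s(r)$.

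With this choice of $r$, $S\setminus\{r\}$ splits by right endpoint into pairwise disjoint parts, each feasible for its subproblem. This gives $|S|\ge f(r)\ge\min f$.
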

\begin{proof}
If $i=j$ or $\I_B[i,j] = \emptyset$, then there are no blue intervals to dominate, and hence $O[i,j]=0$. 
Otherwise, let $b=LB[i,j]$ and let $r$ be the leftmost red interval in $C(b)$ (i.e., having the leftmost starting point $s(r)$), such that $f(r)$ is minimized over all $r \in C(b)$ with $e(r) \le j$. We distinguish between two cases.
\medskip
\\ {\bf Case~1:} $\boldsymbol{s(b) < s(r) < e(b)}$. We distinguish between two subcases according to $e(r)$.
\\ {\bf Case~1.1:} $\boldsymbol{e(r) < j}$. In this case, the blue intervals in $\I_B[i,j] \setminus C(r)$ are partitioned into three disjoint sets: $\I_B[s(b)+1, s(r)-1]$, $\I_B[s(r)+1,e(r)-1]$, and $\I_B[e(r)+1,j]$; see Figure~\ref{fig:DP11}. 
Let $R[s(b)+1, s(r)-1]$, $R[s(r)+1,e(r)-1]$, and $R[e(r)+1,j]$ be minimum dominating sets of the blue intervals in $\I_B[s(b)+1, s(r)-1]$, $\I_B[s(r)+1,e(r)-1]$, and $\I_B[e(r)+1,j]$, respectively.

Clearly, since the red intervals in $\I_R$ do not intersect each other, we have 
\[
\begin{aligned}
R[s(b)+1, s(r)-1] \cap R[s(r)+1,e(r)-1] = \emptyset \ \ \text{ and } \  \
R[s(r)+1,e(r)-1] \cap R[e(r)+1,j] = \emptyset.
\end{aligned}
\]
Moreover, since $r$ is the leftmost interval in $C(b)$, we also have 
\[
\begin{aligned}
R[s(b)+1, s(r)-1] \cap R[e(r)+1,j] = \emptyset,
\end{aligned}
\]
otherwise, there is an interval $r' \in R[s(b)+1, s(r)-1] \cap R[e(r)+1,j]$, such that $s(b) < s(r') <s(r) < e(b) <e(r) < e(r')$, which implies that $r' \in C(b)$ and $s(r') < s(r)$, contradicting that $r$ is the leftmost interval in $C(b)$. 
However, there might be an interval $r' \in R[e(r)+1,j]$, such that $s(r') < s(b)$ and $e(r) < e(r') < j$, i.e., $r'$ starts before $b$ and ends after $r$. 
Hence, each interval $r' \in R[s(b)+1, s(r)-1]$ has $e(r') \in \P[s(b)+1, s(r)-1]$, each interval $r' \in R[s(r)+1,e(r)-1]$ has $e(r') \in \P[s(r)+1,e(r)-1]$, and each interval $r' \in R[e(r)+1,j]$ has $e(r') \in \P[e(r)+1,j]$. 
Thus, the sets $R[s(b)+1, s(r)-1]$, $R[s(r)+1,e(r)-1]$, and $R[e(r)+1,j]$ are disjoint,  
\[
\begin{aligned}
R[s(b)+1, s(r)-1] & \subseteq \I^+_R[s(b)+1, s(r)-1], \\
R[s(r)+1,e(r)-1] & \subseteq \I^+_R[s(r)+1,e(r)-1], \text{ and} \\
R[e(r)+1,j] & \subseteq \I^+_R[e(r)+1,j].
\end{aligned}
\]
Moreover, we have
\[
\begin{aligned}
O[s(b)+1, s(r)-1] & = |R[s(b)+1, s(r)-1]|, \\
O[s(r)+1,e(r)-1] & = |R[s(r)+1,e(r)-1]|, \text{ and} \\
O[e(r)+1,j] & = |R[e(r)+1,j]|.
\end{aligned}
\]
Therefore, $R[s(b)+1, s(r)-1] \cup R[s(r)+1,e(r)-1] \cup R[e(r)+1,j] \cup \{r\}$ is a minimum dominating set of $\I_B[i,j]$, and its size is 
\[
\begin{aligned}
O[s(b)+1, s(r)-1] + O[s(r)+1,e(r)-1] + O[e(r)+1,j] + 1 = f(r).
\end{aligned}
\]
\begin{figure}[htb]
    \centering
    \includegraphics[width=0.66\textwidth]{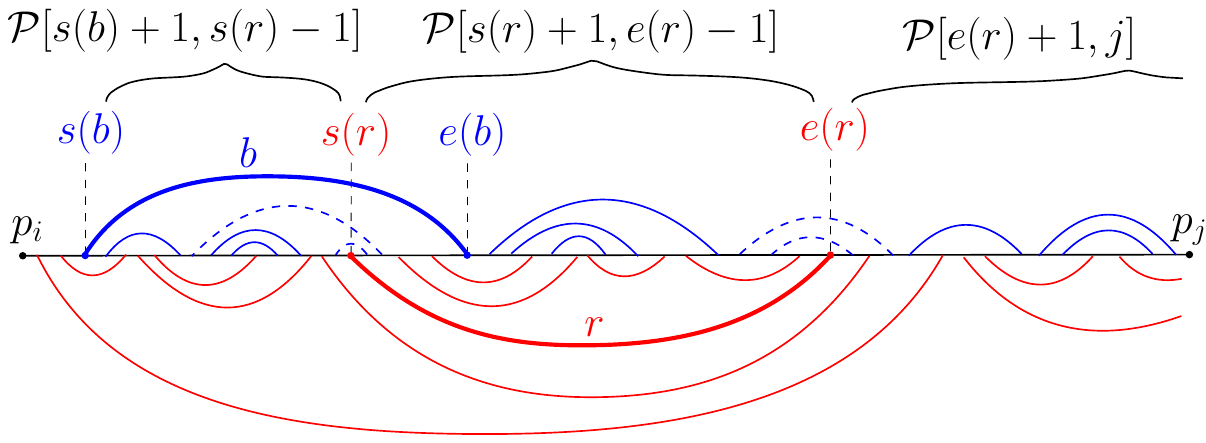}
    \caption{Case~1: $s(b) < s(r) < e(b)$. $r$ partitions the subproblem $\P[i,j]$ into three independent subproblems: $\P[s(b)+1, s(r)-1]$, $\P[s(r)+1,e(r)-1]$, and $\P[e(r)+1,j]$.}
    \label{fig:DP11}
\end{figure}
%
\\ {\bf Case~1.2:} $\boldsymbol{e(r) = j}$. In this case, the blue intervals in $\I_B[i,j] \setminus C(r)$ are partitioned into two disjoint sets: $\I_B[s(b)+1, s(r)-1]$ and $\I_B[s(r)+1,e(r)-1]$. 
Let $R[s(b)+1, s(r)-1]$ and $R[s(r)+1,e(r)-1]$ be minimum dominating sets of the blue intervals in $\I_B[s(b)+1, s(r)-1]$ and $\I_B[s(r)+1,e(r)-1]$, respectively. 

As in Case~1.1, since the red intervals in $\I_R$ do not intersect each other, we have
\[
\begin{aligned}
R[s(b)+1, s(r)-1] \cap R[s(r)+1,e(r)-1] = \emptyset, 
\end{aligned}
\]
and each interval $r' \in R[s(b)+1, s(r)-1]$ has $e(r') \in \P[s(b)+1, s(r)-1]$ and each interval $r' \in R[s(r)+1,e(r)-1]$ has $e(r') \in \P[s(r)+1,e(r)-1]$.
Thus, the sets $R[s(b)+1, s(r)-1]$ and $R[s(r)+1,e(r)-1]$ are disjoint,  
\[
\begin{aligned}
R[s(b)+1, s(r)-1] \subseteq \I^+_R[s(b)+1, s(r)-1], \ \ \text{ and } \ \
R[s(r)+1,e(r)-1] \subseteq \I^+_R[s(r)+1,e(r)-1].
\end{aligned}
\]
Moreover, 
\[
\begin{aligned}
O[s(b)+1, s(r)-1] = |R[s(b)+1, s(r)-1]| \ \ \text{ and } \ \
O[s(r)+1,e(r)-1] = |R[s(r)+1,e(r)-1]|.
\end{aligned}
\]
%
Therefore, $R[s(b)+1, s(r)-1] \cup R[s(r)+1,e(r)-1] \cup \{r\}$ is a minimum dominating set of $\I_B[i,j]$, and its size is 
\[
\begin{aligned}
O[s(b)+1, s(r)-1] + O[s(r)+1,e(r)-1] + 1 = f(r).
\end{aligned}
\]
\medskip
\\ {\bf Case~2:} $\boldsymbol{s(b) < e(r) < e(b)}$. In this case, the blue intervals in $\I_B[i,j] \setminus C(r)$ are partitioned into two disjoint sets: $\I_B[s(b)+1, e(r)-1]$ and $\I_B[e(r)+1,j]$; see Figure~\ref{fig:DP21}. 
Let $R[s(b)+1, e(r)-1]$ and $R[e(r)+1,j]$ be minimum dominating sets of the blue intervals in $\I_B[s(b)+1, e(r)-1]$ and $\I_B[e(r)+1,j]$, respectively. 

Since the red intervals in $\I_R$ do not intersect each other, we have
\[
\begin{aligned}
R[s(b)+1, e(r)-1] \cap R[e(r)+1,j] = \emptyset, 
\end{aligned}
\]
and each interval $r' \in R[s(b)+1, e(r)-1]$ has $e(r') \in \P[s(b)+1, e(r)-1]$ and each interval $r' \in R[e(r)+1,j]$ has $e(r') \in \P[e(r)+1,j]$.
Thus, the sets $R[s(b)+1, e(r)-1]$ and $R[e(r)+1,j]$ are disjoint,  
\[
\begin{aligned}
R[s(b)+1, e(r)-1] \subseteq \I^+_R[s(b)+1, e(r)-1], \ \text{ and } \ \
R[e(r)+1,j] \subseteq \I^+_R[e(r)+1,j].
\end{aligned}
\]
Moreover, 
\[
\begin{aligned}
O[s(b)+1, e(r)-1] = |R[s(b)+1, e(r)-1]| \ \ \text{ and } \ \
O[e(r)+1,j] = |R[e(r)+1,j]|.
\end{aligned}
\]
%
Therefore, $R[s(b)+1, e(r)-1] \cup R[e(r)+1,j] \cup \{r\}$ is a minimum dominating set of $\I_B[i,j]$, and its size is 
\[
\begin{aligned}
O[s(b)+1, s(r)-1] + O[s(r)+1,e(r)-1] + 1 = f(r).
\end{aligned}
\] 
\begin{figure}[htb]
    \centering
    \includegraphics[width=0.65\textwidth]{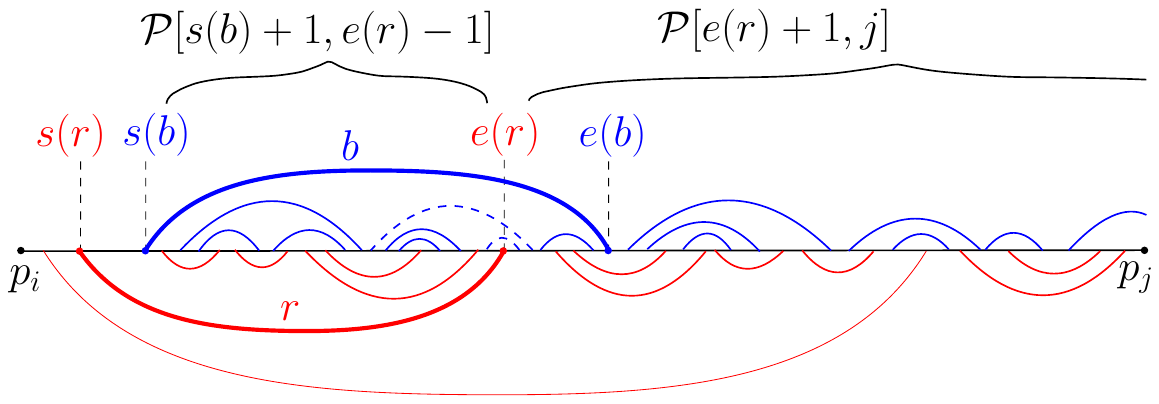}
    \caption{Case~2: $s(b) < e(r) < e(b)$. $r$ partitions the subproblem $\P[i,j]$ into two independent subproblems: $\P[s(b)+1, e(r)-1]$ and $\P[e(r)+1,j]$.}
    \label{fig:DP21}
\end{figure}
\medskip
\\ Since $r$ minimizes $f(r)$, we have $O[i,j] = f(r)$, and therefore $O[i,j] = \underset{\underset{ e(r) \le j}{r \in C(b)}}{\min} f(r)$.
\end{proof}

According to Lemma~\ref{lemma:DP}, our goal is to compute $O[1,2N] = |R|$.

\subsubsection{Algorithm implementation and running time}
We focus on computing $O[1,2N] = |R|$. By slightly modifying the algorithm using a standard backtracking technique, an actual minimum dominating set $R$ can also be recovered within asymptotically the same time complexity.

We compute $O[1, 2N]$ using dynamic programming. 
The dynamic programming table $O$ has $2N$ rows and $2N$ columns. Each entry $O[i,j]$ stores the solution for the subproblem defined on $\P[i,j]$, i.e., the minimum number of red intervals from $\I^+_R[i,j]$ needed to dominate all blue intervals in $\I_B [i,j]$. 

We fill in the table iteratively in such a way that, for each entry $O[i,j]$, all of the entries needed to compute $O[i,j]$ have already been computed. To this end, we fill the table $O$ in increasing order of the length $j-i$. For each $1 \le i \le j \le 2N$, in increasing order of $j-i$, if $i=j$ or $\I_B[i,j] = \emptyset$, then we set $O[i,j] = 0$. Otherwise, for $b = LB[i,j]$ and for each $r \in C(b)$ with $e(r) \le j$, we compute $O[s(b)+1, s(r)-1] + O[s(r)+1,e(r)-1] + O[e(r)+1,j] + 1$, in case $s(b) < s(r) < e(b) < e(r) < j$, $O[s(b)+1, s(r)-1] + O[s(r)+1,e(r)-1] + 1$, in case $s(b) < s(r) < e(b) < e(r) = j$, or $O[s(b)+1, e(r)-1] + O[e(r)+1,j] + 1$, in case $s(r) < s(b) < e(r)$, and take the minimum over these values; see Algorithm~\ref{algo:alg1}. 
\begin{algorithm}[ht]
\caption{\textsc{ComputeO}$(\I_R,\I_B)$}
\label{algo:alg1}
\begin{algorithmic}[1]
\REQUIRE A set $\I_R$ of $n$ red intervals and a set $\I_B$ of $m$ blue intervals.
\ENSURE The minimum number of red intervals from $\I_R$ needed to dominate all intervals in $\I_B$.

\STATE $N \leftarrow n + m$.
\STATE let $1,2,\ldots,2N$ be the endpoints of $\I_R \cup \I_B$ listed from left to right.
\FOR {$i \leftarrow 1$ to $2N$} 
	\FOR {$j \leftarrow 1$ to $2N$}
        \STATE $O[i,j] \leftarrow \infty$
    \ENDFOR    
\ENDFOR

\FOR {each $b \in \I_B$}
	\STATE let $C(b)$ be the set of red intervals intersecting $b$.
\ENDFOR

\FOR{$\ell=0$ to $2N-1$}
    \FOR{$i=1$ to $2N-\ell$}
        \STATE $j\leftarrow i+\ell$
        \IF{$i=j$ or $\I_B[i,j]=\emptyset$}
            \STATE $O[i,j]\leftarrow 0$
        \ELSE
            \STATE let $b=LB[i,j]$ be the interval in $\I_B[i,j]$ with leftmost endpoint $s(b)$.
            \FOR{each $r\in C(b)$ with $e(r)\le j$}
                \IF{$s(b)<s(r)<e(b)<e(r)<j$}
                    \STATE $O[i,j]\leftarrow \min\bigl\{O[i,j] \ , \ O[s(b)+1,s(r)-1]$ $+ O[s(r)+1,e(r)-1]$ $+O[e(r)+1,j]+1\bigr\}$
                \ELSIF{$s(b)<s(r)<e(b)<e(r)=j$}
                    \STATE $O[i,j]\leftarrow \min\bigl\{O[i,j] \ , \ O[s(b)+1,s(r)-1]$ $+ O[s(r)+1,e(r)-1]+1\bigr\}$
                \ELSIF{$s(b)<e(r)<e(b)$}
                    \STATE $O[i,j]\leftarrow \min\bigl\{O[i,j] \ , \ O[s(b)+1,e(r)-1]$ $+O[e(r)+1,j]+1\bigr\}$
                \ENDIF
            \ENDFOR
        \ENDIF
    \ENDFOR
\ENDFOR

\RETURN $O[1,2N]$
\end{algorithmic}
\end{algorithm}

\paragraph{Running time.}
For each interval $b \in \I_B$, we can compute $C(b)$ in $O(n)$ time.
For each entry $O[i,j]$, we consider every red interval that intersects $LB[i,j]$. Thus, $O[i,j]$ can be computed in $O(n)$ time. Since the table $O$ has $O(N^2)$ entries, the total running time for computing $|R| = O[1,2N]$ is $O(N^2 n) = O((n+m)^2n)$.

\paragraph{Recovering $R$.}
To recover the actual set $R$, we store, for each entry $O[i,j]$, a red interval $r$ that attains the minimum. After the table has been filled, we backtrack from $O[1,2N]$ and recursively collect the selected red intervals.

\subsection{Approximating the MDS Problem}
Recall that $D^*$ denotes a minimum dominating set of $G$. 
We compute a minimum dominating set $R$ of red intervals that dominate the blue intervals in $\I_B$, using the algorithm described in Section~\ref{sec:R}. 
Using a symmetric algorithm, we compute a minimum dominating set $B$ of blue intervals that dominate the red intervals in $\I_R$.  
%
\begin{lemma} \label{lemma:approx}
$|R|\le |D^*|$ and $|B|\le |D^*|$.
\end{lemma}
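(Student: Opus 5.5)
Since $D^*$ dominates $G$, it dominates every blue interval. The plan is to turn $D^*$ into a set of red intervals that dominates all of $\I_B$ and has size at most $|D^*|$. The optimality of $R$ (Lemma~\ref{lemma:DP} and the algorithm of Section~\ref{sec:R}) then gives $|R|\le |D^*|$. The bound $|B|\le|D^*|$ follows by the symmetric argument with the colours exchanged.

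Throughout, we keep the convention of Section~\ref{sec:R}: blue intervals meeting no red interval are forced into every solution. We therefore treat them separately, or equivalently assume every blue interval intersects at least one red interval. With this in place, partition $D^*=D^*_R\cup D^*_B$ into its red and blue parts. Each blue interval $b\notin D^*$ is dominated by a neighbour in $D^*$, and that neighbour is red because $G$ is bipartite. So the only blue intervals not dominated by $D^*_R$ lie in $D^*_B$ itself.

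For each $b\in D^*_B$ that is not already dominated by $D^*_R$, we pick an arbitrary red interval $r_b\in C(b)$; one exists by the assumption above. Let
\[
R' = D^*_R \cup \{r_b : b\in D^*_B\}.
\]
Then $R'$ consists only of red intervals and $|R'|\le |D^*_R|+|D^*_B| = |D^*|$. Every blue interval is dominated by $R'$: blue intervals outside $D^*$ were already dominated by $D^*_R$, and each $b\in D^*_B$ is dominated by $D^*_R$ or by $r_b$.

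Finally, $R'$ is a feasible solution of the subproblem $O[1,2N]$, because $\I^+_R[1,2N]=\I_R$ and $\I_B[1,2N]=\I_B$. Hence $|R|=O[1,2N]\le|R'|\le|D^*|$.

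The only delicate point is the treatment of isolated blue intervals. These must belong to $D^*$ and to any blue-dominating set, so adding them to both sides preserves the inequality. Beyond that, the argument is a direct replacement argument and I expect no real obstacle.
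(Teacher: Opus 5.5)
Your proposal is correct and is essentially the paper's own argument: replace each blue interval of $D^*$ by an arbitrary intersecting red interval to get a red set of size at most $|D^*|$ that dominates $\I_B$, then use the minimality of $R$ (and argue symmetrically for $B$). Your explicit handling of blue intervals with no red neighbour, which the paper covers only through the standing assumption of Section~\ref{sec:R}, is a small but welcome bit of extra care.
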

\begin{proof}
Consider a minimum dominating set $D^*$. Every blue interval in $\I_B$ is either in $D^*$ or is dominated by some red interval in $D^*$. Since every blue interval is intersected by at least one red interval, we can replace each blue interval in $D^*$ with an arbitrary intersecting red interval from $\I_R$. This yields a set $R'$ of red intervals of size at most $|D^*|$ that dominates all the blue intervals. Since $R$ is a minimum dominating set of the blue intervals, we have $|R| \le |R'| \le |D^*|$. 
A symmetric argument holds for $B$.   
\end{proof}

Let $D = R \cup B$. By Lemma~\ref{lemma:approx}, $|D| \le 2\cdot |D^*|$.

\begin{theorem} \label{thm:approx}
Given a bipartite circle graph $G$ containing $n$ red and $m$ blue chords, one can compute in $O((n+m)^3)$ time a dominating set of $G$ of size at most twice the size of a minimum dominating set of $G$.     
\end{theorem}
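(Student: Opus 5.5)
The algorithm is the one described above: compute $R$ with Algorithm~\ref{algo:alg1}, compute $B$ with the symmetric algorithm (roles of red and blue swapped), and output $D=R\cup B$. The proof has three parts: $D$ dominates $G$, $|D|\le 2|D^*|$, and the running time is $O((n+m)^3)$.

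\textbf{Preprocessing.} Before calling Algorithm~\ref{algo:alg1}, I handle blue intervals that intersect no red interval. Each such interval is isolated in $G$, so it belongs to every dominating set. I put all isolated vertices, of both colors, directly into $D$. There are at most $|D^*|$ of them, but I charge them more carefully than that. Let $G'$ be $G$ with the isolated vertices removed, and let $D'^*$ be a minimum dominating set of $G'$. Then $|D^*| = \#\text{isolated} + |D'^*|$. Applying Lemma~\ref{lemma:approx} to $G'$ (where every blue interval meets a red one and vice versa) gives $|R|,|B|\le |D'^*|$. Hence
\[
|D| \le \#\text{isolated} + 2|D'^*| \le 2|D^*|.
\]

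\textbf{Correctness.} By Lemma~\ref{lemma:DP}, $O[1,2N]$ equals the minimum number of red intervals dominating $\I_B$. This uses $\I^+_R[1,2N]=\I_R$ and $\I_B[1,2N]=\I_B$. Backtracking recovers a set $R$ attaining this minimum, and symmetrically we obtain $B$. Every blue vertex of $G'$ has a neighbor in $R$, and every red vertex of $G'$ has a neighbor in $B$, so $R\cup B$ dominates $G'$. Adding the isolated vertices makes $D$ a dominating set of $G$.

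\textbf{Running time.} Computing the sets $C(b)$ takes $O(nm)$ time in total. The DP table has $O(N^2)$ entries. For each entry, $\I_B[i,j]=\emptyset$ and $LB[i,j]$ can be answered in $O(1)$ after an $O(N^2)$ precomputation: $LB[i,j]$ is the blue interval with the smallest start $\ge i$ whose end is $\le j$, which can be tabulated incrementally in $j$ for each fixed $i$. The inner loop over $C(b)$ costs $O(n)$, so computing $R$ takes $O(N^2 n)$ time. Symmetrically, computing $B$ takes $O(N^2 m)$ time. Backtracking is linear in $|R|+|B|$. The total is $O(N^2(n+m)) = O((n+m)^3)$.

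The main obstacle is the bookkeeping just described: checking that $LB$ and emptiness queries fit within the stated budget, and handling isolated vertices without losing the factor~2.
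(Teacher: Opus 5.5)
Your proposal is correct and follows the paper's route: compute $R$ with the dynamic program of Lemma~\ref{lemma:DP}, compute $B$ symmetrically, bound each by $|D^*|$ via Lemma~\ref{lemma:approx}, and count $O(N^2 n)+O(N^2 m)=O((n+m)^3)$ time. Your explicit removal of isolated vertices and your $O(1)$ tabulation of $LB[i,j]$ only make precise what the paper covers with ``we may assume that each blue interval intersects at least one red interval'' and leaves implicit in its running-time count.
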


\section{PTAS for the MDS Problem} \label{sec:PTAS}
Let $G$ be a bipartite circle graph represented by a set $\I=\I_R \cup \I_B$ of intervals arranged along a horizontal line $L$, such that $\I_R$ is a set of $n$ red intervals, and $\I_B$ is a set of $m$ blue intervals. 
In this section, we show that the standard local-search algorithm leads to a PTAS for computing a minimum dominating set of $G$. For the analysis, we use a separator-based technique introduced independently by Chan and Har-Peled~\cite{Chan2012} and by Mustafa and Ray~\cite{Mustafa2010}. The main part of this proof technique is to show the existence of a planar graph satisfying a locality condition (to be defined in Section~\ref{sec:PTASratio}).

The algorithm is given in Section~\ref{sec:PTASalg}, and the analysis of the approximation ratio is provided in Section~\ref{sec:PTASratio}.

\subsection{The algorithm} \label{sec:PTASalg}
A dominating set $D \subseteq \I$ is called $k$-\textit{locally optimal} if one cannot obtain a smaller dominating set by replacing a subset $X \subseteq D$ of size at most $k$ with a subset of size at most $|X|-1$ from $\I$. Given a constant $\eps > 0$, our algorithm computes a $k$-locally optimal dominating set for $k=\frac{c}{\eps^2}$, where $c$ is a suitably large constant. 
The algorithm begins with an arbitrary dominating set $D$ of $G$ (for example, the dominating set computed in Section~\ref{sec:approxAlg}) and proceeds by making small exchanges of size at most $k$. That is, as long as there is a subset $X \subseteq D$ of size at most $k$ and a subset $X' \subseteq \I$ of size at most $|X|-1$, such that $(D\setminus X) \cup X'$ is a dominating set of $G$, we replace $D$ with $(D\setminus X) \cup X'$. The algorithm stops when no further improvements exist and returns $D$. 

Since the size of $D$ decreases by at least one after each iteration, the number of iterations in the algorithm is at most $n+m$. Moreover, each iteration takes $O((n+m)^k)$ time, because we need to check every subset of size at most $k$. Therefore, the total running time of the algorithm is $O((n+m)^{k+1}) = (n+m)^{O(1/\eps^2)}$.

\subsection{Approximation ratio} \label{sec:PTASratio}
We now analyze the approximation ratio of our algorithm based on the framework of~\cite{Chan2012,Mustafa2010}.
Let $D$ denote the locally optimal dominating set obtained by the algorithm, and let $O$ denote an optimal dominating set of $G$. We assume, w.l.o.g., that $D \cap O = \emptyset$, i.e., there is no interval that belongs to both $D$ and $O$. (Common intervals are ignored since they belong to both solutions and do not affect the analysis.)

\begin{lemma}[Locality Condition]\label{lemma:LocalityCondition}
There exists a planar graph $H = (D \cup O,E_H)$, such that for every interval $x \in \I$, there is an interval $u \in D$ dominating $x$ and an interval $v \in O$ dominating $x$, such that $(u,v) \in E_H$.
\end{lemma}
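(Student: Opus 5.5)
We will build $H$ from a drawing of the interval model. Draw the red intervals of $D\cup O$ as pairwise non-crossing semicircles below $L$ and the blue ones as pairwise non-crossing semicircles above $L$. Because no two intervals of the same color cross, this is a plane drawing. In it a red arc and a blue arc never meet geometrically. They are adjacent in $G$ exactly when their endpoints interleave on $L$.

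Each interval $y\in D\cup O$ will be represented by its arc $\gamma_y$, a connected set. An edge of $H$ will be a simple curve joining two such arcs. The whole task is to choose, for each $x\in\I$, one \emph{witness pair} $(u,v)$ with $u\in D$ and $v\in O$ both dominating $x$, and to draw all witness edges so that no two of them cross and none crosses an arc. Contracting each arc $\gamma_y$ to a point then gives a plane drawing of $H$.

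\textbf{The three cases.} Since $D\cap O=\emptyset$, there are three cases for $x$.
\begin{enumerate}
\item $x\in D$. Take $u=x$ and let $v$ be any interval of $O$ adjacent to $x$; it exists and has the opposite color.
\item $x\in O$. This is symmetric to the first case.
\item $x\notin D\cup O$. Then $u$ and $v$ are both neighbours of $x$, so both have the color opposite to $x$. Say $x$ is blue and $u,v$ are red, non-crossing, and each crosses $x$.
\end{enumerate}

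\textbf{Drawing the edges.} In case 3, $u$ and $v$ each have exactly one endpoint strictly inside $x$. Among all such pairs we choose $u\in D$ and $v\in O$ whose endpoints inside $x$ are consecutive along $L$ among the endpoints of red members of $D\cup O$ lying inside $x$. This choice is possible: walk along $x$ from one endpoint-in-$x$ of a $D$-neighbour to one of an $O$-neighbour; a consecutive $D$/$O$ change must occur. We then draw the witness edge as a curve just below the segment of $L$ between these two endpoints, staying inside the face of the red arrangement that contains that segment. It can then be routed without meeting any other red arc.

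In cases 1 and 2, $u$ and $v$ cross. Take an endpoint $p$ of $u$ lying inside $v$ and consider the nearest arc above (or below) $p$. We choose $v$ to be the innermost $O$-arc of the opposite color containing an endpoint of $u$. The edge is drawn from $\gamma_u$ at $p$ straight into $\gamma_v$, through the region between $L$ and $\gamma_v$. If a $D$-arc of the same color as $v$ is nested in between, we first redirect the edge to that arc's nearest $O$-neighbour. The aim is to make the vertical segment meet no other arc.

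\textbf{Main obstacle.} The hardest step is proving that the witness curves are pairwise non-crossing. Curves of type 3 lie along disjoint stretches of $L$ inside faces of a single-color arrangement. Curves of types 1 and 2 are short vertical segments at endpoints. The delicate interaction is a type-3 curve running along $L$ past an endpoint $p$ where a type-1/2 vertical segment starts.

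To handle this, we plan to slightly perturb: draw type-3 curves strictly inside the open side (below $L$ for red, above $L$ for blue), and type-1/2 segments on the other side of $L$ from those. If conflicts remain, we would fall back on the standard fact that a graph on disjoint connected regions, with edges realized by pairwise disjoint connecting paths, is planar. We then verify disjointness by the nesting-minimal choices above, using the laminar (nested-or-disjoint) structure of each color class.
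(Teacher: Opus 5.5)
\textbf{There is a genuine gap: the non-crossing of the witness curves is not proved, and the plan you give for it fails.} Your reduction to drawing pairwise disjoint curves between the arcs $\gamma_y$ is sound. But non-crossing of those curves is the whole content of the lemma, and your proposal leaves it as a plan whose steps do not work as stated.

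\emph{A concrete crossing.} Take a blue $x'\notin D\cup O$ whose consecutive red neighbours are $u\in D$ and $v\in O$, with $u\subset v$ and both entering $x'$ from the left. Your type-3 curve then runs just below $L$ over $(e(u),e(v))$. Now take a blue $x\in D$ with $e(u)<s(x)<e(v)<e(x)<e(x')$, whose only $O$-neighbour is $v$. Your recipe draws a vertical segment from $s(x)$ down to $\gamma_v$, and it crosses the type-3 curve of $x'$.

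\emph{Why your proposed fixes do not help.} Putting the two kinds of curves on opposite sides of $L$ cannot work. A cross-coloured edge joins an arc above $L$ to an arc below $L$, so it must enter the zone next to $L$ on at least one side, and you have type-3 curves hugging $L$ on both sides. Several other steps also fail:
\begin{itemize}
\item ``Disjoint stretches'' is false: the stretch of a blue interval nested in $x'$ can lie inside the stretch of $x'$.
\item The ``redirect'' step is undefined. A red $D$-arc nested in $v$ and containing the endpoint $p$ of $x$ necessarily crosses $x$, so what you need there is a same-coloured pair from $C(x)$.
\item ``Consecutive'' must be taken among arcs crossing $x$, not among all arcs with an endpoint inside $x$.
\end{itemize}

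\textbf{The two missing ideas, both used in the paper.}

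\emph{First, split by the neighbourhood of $x$, not by whether $x\in D\cup O$.} Whenever $C(x)$ meets both $D$ and $O$, use the consecutive same-coloured pair along $x$, even if $x\in D\cup O$. Use a cross-coloured edge only when $D\cap C(x)=\emptyset$ (forcing $x\in D$) or $O\cap C(x)=\emptyset$ (forcing $x\in O$). In that case join $x$ to the shortest interval of the other set that dominates $x$.

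\emph{Second, draw every edge along the interval $x$ that contributes it.} Below a line $\ell$, draw the red arcs together with mirror images of the blue arcs. Above a parallel line $\ell'$, draw the blue arcs with mirror images of the red arcs. Route a same-coloured edge along the mirror image of $x$ between its two consecutive crossings. Route a cross-coloured edge by a vertical segment in the strip at the endpoint of $x$ lying inside $v$, followed by the mirror image of $x$ up to $\gamma_v$.

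\textbf{Why this drawing is planar.}
\begin{itemize}
\item Edges contributed by distinct intervals of one colour cannot cross, because those intervals do not cross.
\item Edges contributed by the two colours meet only in the strip, as vertical segments at distinct abscissae.
\item Consecutiveness keeps a same-coloured edge off all other vertex arcs.
\item Any arc of $D\cup O$ met by the reflected piece of a cross-coloured edge would cross $x$ and be nested in $v$. This is excluded by $D\cap C(x)=\emptyset$ and the minimality of $v$.
\end{itemize}
In the example above, this places the edge of $x'$ on the mirror image of $x'$, which encloses that of $x\subset x'$, and the conflict disappears.
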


Section~\ref{sec:LocalityCondition} is devoted to the proof of Lemma~\ref{lemma:LocalityCondition}. Once we establish the locality condition, the analysis of the algorithm is the same as in~\cite{Mustafa2010}. For completeness, we include the proof specialized to our setting. 

Since the graph $H = (D \cup O,E_H)$ is planar, the following planar separator theorem can be used.
\begin{theorem}[Frederickson~\cite{Federickson1987}] \label{thm:Frederickson}
There are constants $c_1$, $c_2$, and $c_3$, such that for any planar graph $G=(V,E)$ with $n$ vertices and a parameter $1 \le r \le n$, there exists a set $S \subseteq V$ of size at most $\frac{c_1n}{\sqrt{r}}$ and a partition of $V\setminus S$ into $\frac{n}{r}$ sets $V_1,V_2,\dots,V_{\frac{n}{r}}$ satisfying: 
\begin{enumerate}
    \item $|V_i| \le c_2r$, for every $1 \le i \le \frac{n}{r}$, 
    \item $N(V_i) \cap V_j = \emptyset$, for every $i \neq j$, and
    \item $|N(V_i) \cap S| \le c_3\sqrt{r}$, for every $1 \le i \le \frac{n}{r}$, 
\end{enumerate}
where $N(V')=\{u \in V\setminus V' : \exists v \in V' \text{ such that } (u,v) \in E\}$.
\end{theorem}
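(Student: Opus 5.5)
The statement is Frederickson's $r$-division theorem, and the plan is to derive it from the Lipton--Tarjan planar separator theorem by recursive splitting. That theorem says a planar graph on $n$ vertices, with nonnegative vertex weights summing to $1$, has a separator of size at most $c\sqrt{n}$ whose removal leaves components of weight at most $2/3$ each. Throughout, the phrase ``$\frac{n}{r}$ sets'' is read as $O(n/r)$ sets, with the constants absorbed into $c_1,c_2,c_3$. A region with more than $c_2 r$ vertices is called \emph{large}. Each phase below cuts the current pieces apart by moving separator vertices into $S$. Property~2 is automatic at the end, because every edge leaving a final piece $V_i$ goes into $S$: the pieces are unions of components of $G\setminus S$.

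\textbf{Phase 1 (size control).} Separate $G$ with uniform weights, put the separator in $S$, group the components into two sides of at most $2n/3$ vertices each, and recurse on every large side. Let $T(n)$ be the number of separator vertices added while processing a region of $n$ vertices. It satisfies
\[
T(n)\le c\sqrt{n}+T(\alpha n)+T((1-\alpha)n),\qquad 1/3\le\alpha\le 2/3,
\]
with $T(n)=0$ for $n\le c_2 r$. The standard induction with hypothesis $T(n)\le a\,n/\sqrt{r}-b\sqrt{n}$ closes, using concavity of $\sqrt{\cdot}$ and $\sqrt{\alpha}+\sqrt{1-\alpha}\ge \sqrt{1/3}+\sqrt{2/3}>1$. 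This gives $|S|=O(n/\sqrt{r})$. Every final region has $\Theta(r)$ vertices, since it was split off from a large parent with a constant balance, so there are $O(n/r)$ regions.

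\textbf{Phase 2 (boundary control).} A region can still touch many separator vertices, which would violate property~3. For each region $V_i$ with $|N(V_i)\cap S|>c_3\sqrt{r}$, apply the weighted separator to the planar graph induced by $V_i$ together with its boundary vertices, putting all weight uniformly on the boundary vertices. Recurse until every piece has boundary at most $c_3\sqrt{r}$. Each split adds $O(\sqrt{r})$ vertices to $S$ and cuts the boundary of each side by a constant factor, up to an additive $O(\sqrt{r})$ term.

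\textbf{Main obstacle: counting Phase 2.} The key quantity is the total boundary $\sum_i|N(V_i)\cap S|$ after Phase 1. It is $O(n/\sqrt{r})$, because each separator vertex has bounded degree: first reduce to maximum degree $3$ by the usual vertex-splitting transformation, which preserves planarity and changes $n$ by a constant factor. The charging step then runs as follows. Only $O\bigl((n/\sqrt{r})/\sqrt{r}\bigr)=O(n/r)$ pieces can start with boundary exceeding $c_3\sqrt{r}$. The recursion on boundary weight produces $O(\text{initial boundary}/\sqrt{r})$ extra pieces, each adding $O(\sqrt{r})$ vertices to $S$. Hence the number of pieces stays $O(n/r)$ and $|S|$ stays $O(n/\sqrt{r})$. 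Property~1 survives, since pieces only shrink. I expect this accounting, together with the degree reduction, to be the delicate part; the size recursion in Phase 1 is routine.
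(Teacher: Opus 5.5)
The paper does not prove this theorem; it cites Frederickson. Your plan is essentially Frederickson's own argument, recast in the removed-separator form used by Mustafa--Ray and Chan--Har-Peled: recursive Lipton--Tarjan splitting to control size, then boundary-weighted splitting to control boundary, with a degree-$3$ reduction so that $\sum_i|N(V_i)\cap S|\le 3|S|$ and each piece plus its boundary has $O(r)$ vertices. So the approach is sound.

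The one step you leave implicit is pulling the division of the degree-$3$ graph $G'$ back to $G$. Put $v$ into $S$ iff its replacement cycle $C_v$ meets $S'$, and otherwise into the unique $V_i$ with $C_v\subseteq V_i'$. Since the cycles are connected and disjoint, this gives $|S|\le|S'|$, the partition, and property~2. For property~3, each $C_w$ with $w\in N(V_i)\cap S$ contains a vertex of $N(V_i')\cap S'$, which yields an injection $N(V_i)\cap S\to N(V_i')\cap S'$.
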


We apply Theorem~\ref{thm:Frederickson} to the graph $H = (D \cup O,E_H)$ in Lemma~\ref{lemma:LocalityCondition}, setting $n = |D| + |O|$ and $r=\frac{k}{c_2}$, where $c_2$ is the constant in Theorem~\ref{thm:Frederickson}. 
Hence, $|V_i| \le k$, for every $1 \le i \le \frac{n}{r}$. 

Let $D_i = V_i \cap D$ and $O_i = V_i \cap O$. Thus, $|D_i| \le |O_i| + |N(V_i) \cap S|$; otherwise, replacing $D_i$ with $O_i \cup (N(V_i) \cap S)$ results in a dominating set of size smaller than $D$, which contradicts that $D$ is locally optimal. Therefore, we have
\[
\begin{aligned}
|D| & \le |S| + \sum_{i=1}^{n/r}|D_i| \  \le |S| +  \sum_{i=1}^{n/r}|O_i| + \sum_{i=1}^{n/r} |N(V_i) \cap S|.
\end{aligned}
\]
Since $O_1,O_2,\dots,O_{\frac{n}{r}}$ are disjoint, we have
\[
\begin{aligned}
|D| \le |O| + |S| + \sum_{i=1}^{n/r} |N(V_i) \cap S|. 
\end{aligned}
\]
Since $\sum\limits_{i=1}^{n/r}|N(V_i) \cap S| \le \frac{c_3 n}{\sqrt{r}} = \frac{c_3(|D|+|O|)}{\sqrt{r}}$, $|S| \le \frac{c_1n}{\sqrt{r}}= \frac{c_1(|D|+|O|)}{\sqrt{r}}$, and $r=\frac{k}{c_2}$, we have 
\[
\begin{aligned}
|D| \le |O| + \frac{c_1(|D|+|O|)}{\sqrt{k}/\sqrt{c_2}} + \frac{c_3(|D|+|O|)}{\sqrt{k}/\sqrt{c_2}}.
\end{aligned}
\]
By setting $c_4 = (c_1+c_3)\sqrt{c_2}$, we have
\[
\begin{aligned}
|D| \le |O| + \frac{c_4(|D|+|O|)}{\sqrt{k}}.
\end{aligned}
\]
Thus, we have
\[
\begin{aligned}
|D| \le \frac{1+c_4/\sqrt{k}}{1-c_4/\sqrt{k}} |O|.
\end{aligned}
\] 
By choosing $k = \frac{c}{\eps^2}$ for sufficiently large constant $c \ge (c_4(\eps+2))^2$, we have $|D| \le (1+\eps) |O|$.

\begin{theorem}
Given a bipartite circle graph $G$ on $n$ red and $m$ blue vertices, and $\eps > 0$, there exists a $(n+m)^{O(1/\eps^2)}$-time algorithm that computes a dominating set of $G$ of size $(1 +\eps)OPT$, where $OPT$ is the size of a minimum dominating set of $G$.
\end{theorem}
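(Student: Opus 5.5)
The plan is to assemble the theorem from two pieces already in place: the local-search algorithm of Section~\ref{sec:PTASalg}, and the separator-based analysis of Section~\ref{sec:PTASratio}. I take Lemma~\ref{lemma:LocalityCondition} (the locality condition) as given here, since its proof is deferred to Section~\ref{sec:LocalityCondition}. Fix $\eps>0$ and set $k=c/\eps^2$ with $c$ a large constant.

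\textbf{Running time.} I would start from the dominating set of Theorem~\ref{thm:approx} and apply improving swaps. Each swap replaces some $X\subseteq D$ with $|X|\le k$ by some $X'\subseteq\I$ with $|X'|\le |X|-1$. Every swap lowers $|D|$, so there are at most $n+m$ rounds. In each round we enumerate all pairs $(X,X')$ and check domination in polynomial time, which costs $(n+m)^{O(k)}$ per round. The total is $(n+m)^{O(1/\eps^2)}$.

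\textbf{Validity of the exchange.} This is the one step I would write out carefully. Let $D$ be $k$-locally optimal and $O$ optimal. First I would justify assuming $D\cap O=\emptyset$: put the common intervals into both sides of every exchange, or equivalently discard them together with the intervals they dominate.

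Apply Lemma~\ref{lemma:LocalityCondition} to get the planar graph $H$. Then apply Theorem~\ref{thm:Frederickson} to $H$ with $r=k/c_2$, so that $|V_i|\le k$ and hence $|D_i|\le k$. I claim that $D'=(D\setminus D_i)\cup O_i\cup(N(V_i)\cap S)$ is a dominating set. Take any $x\in\I$.
\begin{itemize}
\item If some $D$-dominator of $x$ lies outside $D_i$, then $x$ is still dominated in $D'$.
\item Otherwise, the locality condition gives $u\in D$ and $v\in O$, both dominating $x$, with $(u,v)\in E_H$. By assumption $u\in D_i\subseteq V_i$. Since $N(V_i)\cap V_j=\emptyset$ for $j\ne i$, the neighbour $v$ lies in $V_i$ or in $N(V_i)\cap S$. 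In either case $v\in D'$ dominates $x$.
\end{itemize}
So $D'$ dominates $\I$. Local optimality, with $X=D_i$, then forces $|D_i|\le |O_i|+|N(V_i)\cap S|$.

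\textbf{Summing and the arithmetic.} Sum this inequality over $i$, add $|S|$ (the separator may contain vertices of $D$), and use that the $O_i$ are disjoint. With the size bounds of Theorem~\ref{thm:Frederickson} this gives
\[
|D|\le |O|+\frac{c_4(|D|+|O|)}{\sqrt{k}}, \qquad c_4=(c_1+c_3)\sqrt{c_2}.
\]
Write $a=c_4/\sqrt{k}$. Rearranging gives $|D|\le\frac{1+a}{1-a}|O|$. This is at most $(1+\eps)|O|$ whenever $a\le \eps/(2+\eps)$, and that holds once $c\ge (c_4(2+\eps))^2$; for $\eps\le 1$ a single absolute constant suffices.

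\textbf{Main obstacle.} The real difficulty is the locality condition itself: building a planar $H$ on $D\cup O$ from the interval/chord structure. That is deferred to Section~\ref{sec:LocalityCondition}. Within this theorem, the only delicate point is the exchange-validity step, where the separator property $N(V_i)\cap V_j=\emptyset$ is exactly what places the $O$-partner $v$ inside the new set.
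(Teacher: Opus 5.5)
Your proposal is correct and follows the paper's proof essentially line for line. Both use local search with $k=c/\eps^2$, the locality condition (Lemma~\ref{lemma:LocalityCondition}), Frederickson's separator with $r=k/c_2$, the inequality $|D_i|\le|O_i|+|N(V_i)\cap S|$, and the same summation and choice $c\ge(c_4(2+\eps))^2$. Your explicit check that $(D\setminus D_i)\cup O_i\cup(N(V_i)\cap S)$ is dominating, using $N(V_i)\subseteq S$ to place the $H$-neighbour $v$ in the new set, fills in a step the paper only asserts, and your $(n+m)^{O(k)}$ per-round time bound is slightly more accurate than the paper's $O((n+m)^k)$.
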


\subsubsection{Establishing the Locality Condition} \label{sec:LocalityCondition}
We construct a planar \textit{exchange} graph $H = (D \cup O,E_H)$ satisfying the locality condition of Lemma~\ref{lemma:LocalityCondition} as follows. 
Recall that, for each interval $I \in \I$, $s(I)$ and $e(I)$ denote the starting and ending points (i.e., the left and right endpoints) of $I$, respectively, and that $C(I)$ denotes the set of intervals in $\I$ that intersect $I$, i.e., $C(I) = \{I' \in \I:s(I') < s(I) < e(I') < e(I) \text{ or } s(I) < s(I') < e(I) < e(I')\}$. 
We assume throughout that the endpoints of the intervals of $\I$ are distinct.

Let $I$ be an interval of $\I$ and assume, w.l.o.g., that $I$ is blue.
We define the exchange edge contributed by $I$ as follows. 
\medskip
\\ \textbf{Rule~1:} If $D \cap C(I) \neq \emptyset$ and $O \cap C(I) \neq \emptyset$, then we select an interval $u \in D \cap C(I)$ and an interval $v \in O \cap C(I)$, such that $u$ and $v$ are consecutive (along $I$) among all intervals of $D \cup O$ intersecting $I$, and add $(u,v)$ to $E_H$. 
\medskip
\\ \textbf{Rule~2:} If $D \cap C(I) = \emptyset$, then $I \in D$ and there exists at least one interval $v \in O$ dominating $I$. We select the shortest interval $v \in O$ that dominates $I$ (if there is more than one such interval, we select one arbitrarily) and add $(I,v)$ to $E_H$.  
\medskip
\\ \textbf{Rule~3:} Similarly, if $O \cap C(I) = \emptyset$, then $I \in O$ and there exists at least one interval $v \in D$ dominating $I$. We select the shortest interval $u \in D$ that dominates $I$ (if there is more than one such interval, we select one arbitrarily) and add $(u,I)$ to $E_H$. 
\begin{figure}[htb]
    \centering
    \includegraphics[width=0.67\textwidth]{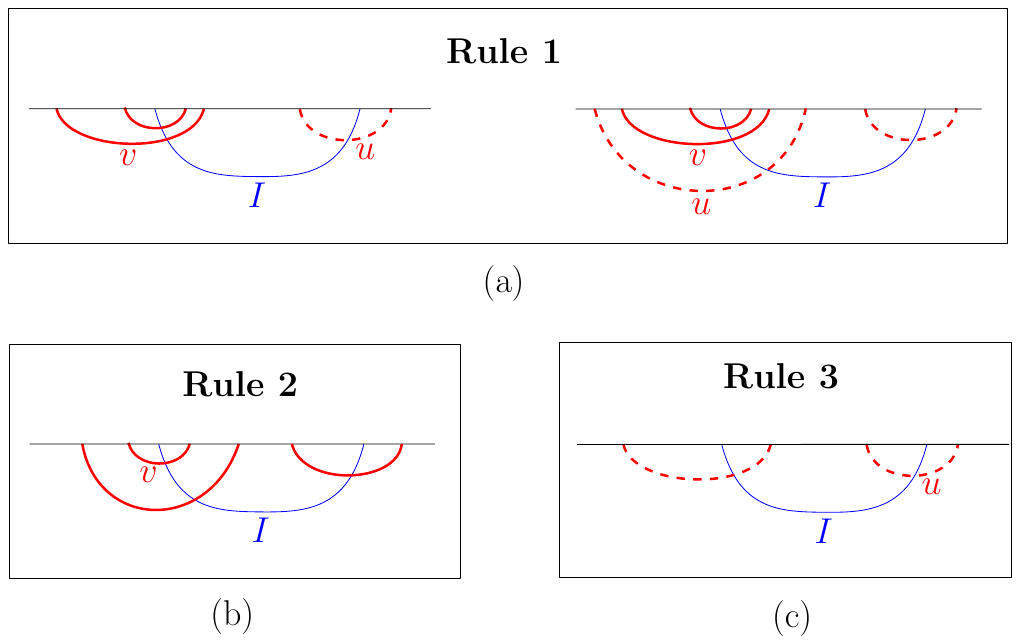}
    \caption{The exchange edge produced by a blue interval $I$. (Here and after, Solid curves indicate intervals of $O$ and dashed curves indicate intervals of $D$.) (a) An edge $(u,v)$ is added by Rule~1. $u$ and $v$ can be disjoint or nested. (b) An edge $(I,v)$ is added by Rule~2. (c) An edge $(u,I)$ is added by Rule~3.}
    \label{fig:Rules}
\end{figure}
%

Notice that the edges produced by \textbf{Rule~1} connect intervals of the same color, while the edges produced by \textbf{Rule~2} and \textbf{Rule~3} connect intervals of different colors. Moreover, since $D \cap O = \emptyset$, the two endpoints of the resulting exchange edge do not coincide as vertices. 
If the same edge is produced by several intervals, we keep only one copy. 

The exchange graph $H$ has a vertex set $D \cup O$ and an edge set $E_H$ containing the edges produced by the above rules over all intervals of $\I$.
We now prove that $H$ is planar. The proof is topological. 
We first consider the edges connecting intervals of the same color, i.e., the edges produced by \textbf{Rule~1}. 
Let $H_R$ be the subgraph of $H$ induced by the red intervals of $D \cup O$, and let $H_B$ be the subgraph of $H$ induced by the blue intervals of $D \cup O$.

\begin{lemma} \label{lemma:H_R}
The graph $H_R$ is planar.
\end{lemma}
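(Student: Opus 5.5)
We will prove that $H_R$ is planar by drawing it without crossings, using the interval model. The key structural fact is that the red intervals of $D\cup O$ are pairwise non-crossing, since no two red intervals intersect. They therefore form a laminar family: any two are either disjoint or nested. The edges of $H_R$ all come from Rule~1 applied to blue intervals $I$. Such an edge joins two red intervals $u,v$ that both cross $I$ and are consecutive along $I$ among the intervals of $D\cup O$ crossing $I$. So the plan is to draw each red interval as a region below $L$ and route each edge inside the region swept by its blue witness $I$.

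Concretely, represent each red interval $r$ by the point $p_r$ placed just below $L$, directly under a point of $r$ not covered by any red child of $r$. Equivalently, use the laminar tree: contract each red interval's "private" region, the part of the half-plane under the arc of $r$ and above the arcs of its maximal sub-intervals, to a vertex. This is the standard face structure of a family of non-crossing arcs drawn below $L$. The arcs of $\I_R\cap(D\cup O)$, drawn as half-disks below $L$, partition the lower half-plane into faces, one face $F_r$ per red interval $r$ (plus an outer face), and we place vertex $r$ inside $F_r$.

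For an edge $(u,v)$ produced by a blue $I$, look at the blue arc of $I$, or rather the segment of $L$ spanned by $I$. The red intervals of $D\cup O$ crossing $I$ each have exactly one endpoint inside $I$. Between the crossing points of $u$ and $v$ along $I$ there is no other crossing point of an interval of $D\cup O$. So if we draw, just below $L$, the short path from the endpoint of $u$ inside $I$ to the endpoint of $v$ inside $I$, it crosses no red arc of $D\cup O$ except at the two ends, though it may pass under arcs of red intervals fully contained in $I$ between those endpoints. We handle those by routing the path under such contained arcs, or we note that it stays within a single face adjacent to both $u$ and $v$. The edge is then realised as a curve from $p_u$ to the boundary point, along $L$, to $p_v$. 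Equivalently, $u$ and $v$ are two arcs bounding a common face, so we add the edge inside that face.

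The main obstacle is showing that edges drawn inside a common face do not cross. The plan is to show that, within a face $F$, the red intervals bounding $F$ appear in a cyclic order along its boundary, and that Rule~1 edges induced by blue intervals join pairs of "adjacent endpoints" along $L$. Any two such witness segments on $L$ are disjoint or nested, because consecutive-crossing segments contain no endpoint of $D\cup O$ in their interior. Nested or disjoint chords inside a disk do not cross, which is outerplanarity of each face's graph. Gluing the faces together along the tree structure then yields a planar embedding of $H_R$; duplicate edges are removed.
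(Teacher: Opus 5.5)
\textbf{There is a genuine gap, in the step you yourself call the main obstacle.}

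You derive ``witness segments are disjoint or nested'' from two claims:
\begin{itemize}
\item a segment $[a_u,a_v]$ (with $a_u<a_v$) ``contains no endpoint of $D\cup O$ in its interior'';
\item Rule~1 edges join ``adjacent endpoints'' along $L$.
\end{itemize}
Both claims are false. Consecutiveness only excludes endpoints of red intervals that \emph{cross} $I$. Red intervals of $D\cup O$ nested inside $(a_u,a_v)$ are allowed, as you note one paragraph earlier. For example, take blue $I=[0,10]$ and red $u=[-5,2]\in D$, $w=[4,6]\in D$, $v=[8,15]\in O$. Then $I$ produces $(u,v)$ with witness $[2,8]$, which contains both endpoints of $w$. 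Also, $u$ and $v$ are not adjacent on the boundary of their common (outer) face.

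More importantly, your argument never uses that the blue intervals are pairwise non-crossing, and without that hypothesis the conclusion is false. Suppose a blue $I'=[5,20]$ (crossing $I$) and a red $z=[16,25]\in O$ were present. Then $I'$ would produce $(w,z)$ with witness $[6,16]$. This interleaves with $[2,8]$, so the chords $uv$ and $wz$ of the outer face would have to cross.

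Blue non-crossing is exactly what the paper's proof rests on. The paper draws $(u,v)$ along the blue curve of $b$ between its crossings with $u$ and $v$. Edges are then non-crossing because blue intervals are; consecutiveness keeps each edge off the other red vertex-regions; and the red vertex-regions are finally contracted.

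\textbf{How to repair your framework.}

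First, prove the disjoint-or-nested property using blue non-crossing. Suppose $[a_u,a_v]\subseteq I$ and $[a_{u'},a_{v'}]\subseteq I'$ interleave, say $a_u<a_{u'}<a_v<a_{v'}$. Then $I$ and $I'$ overlap, so, being blue, they are nested.
\begin{itemize}
\item If $I'\subseteq I$: the other endpoint of $v$ lies outside $I\supseteq I'$, so $v$ crosses $I'$ with endpoint $a_v$ strictly between $a_{u'}$ and $a_{v'}$. This contradicts consecutiveness along $I'$.
\item If $I\subseteq I'$: symmetrically, $u'$ crosses $I$ with endpoint $a_{u'}\in(a_u,a_v)$, again a contradiction.
\end{itemize}

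Second, you assert without proof that $u$ and $v$ bound a common face. This needs three facts:
\begin{itemize}
\item Every red interval of $D\cup O$ with an endpoint in $(a_u,a_v)$ lies inside $(a_u,a_v)$; otherwise it crosses $I$ there, or crosses $u$ or $v$.
\item The case $a_u=s(u)$, $a_v=e(v)$ cannot occur, since it would give $s(v)<s(u)<e(v)<e(u)$.
\item In the remaining cases, $u$ is a child of $v$, or $v$ is a child of $u$, or they are siblings, in the containment forest of the red intervals of $D\cup O$.
\end{itemize}

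With these facts, the red intervals of $D\cup O$ and all witness segments form a single laminar family. Drawing them all as semicircles below $L$ and contracting each red semicircle gives the embedding directly. This also makes your ``gluing along the tree structure'' unnecessary.
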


\begin{proof}
Each edge in $H_R$ is generated by a blue interval of $\I$ and connects a red interval from $D$ and a red interval from $O$. 
We draw the red intervals as non-crossing arcs below a baseline $\ell$. This is possible because the red intervals do not cross each other: disjoint red intervals are drawn as disjoint arcs, and nested red intervals are drawn as nested arcs. We regard each red interval as a small vertex-region and draw the exchange edges as curves between these regions.
The blue intervals are drawn similarly.

Let $(u,v)$ be an exchange edge of $H_R$ contributed by a blue interval $b \in \I$ according to \textbf{Rule~1}. 
We draw the edge $(u,v)$ as a sub-curve of $b$ connecting the intersection points of $b$ with $u$ and $v$. 
Since the blue intervals do not cross, this drawing has no crossing edges.
\begin{figure}[htb]
    \centering
    \includegraphics[width=0.65\textwidth]{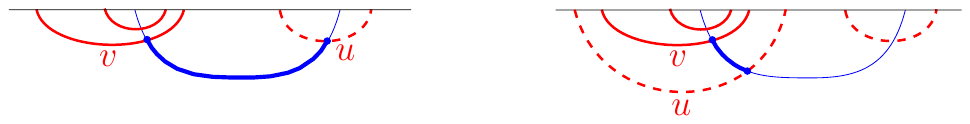}
    \caption{The drawing of an edge $(u,v)$ contributed by a blue interval $b$ according to Rule~1.}
    \label{fig:planar1}
\end{figure}

We now show that each vertex-region is safe, i.e., no exchange edge $(u,v)$ of $H_R$ can intersect the vertex-region of each red interval $r$. This is true since, for each edge $(u,v)$ that is contributed by a blue interval $b$, using \textbf{Rule~1}, $u$ and $v$ are consecutive, and hence, $r$ cannot intersect $b$ between $u$ and $v$.  Thus, the edges in the drawing do not cross any vertex-region of the red intervals.

Finally, consider two exchange edges incident to the same red interval $r$. Each such edge is attached to $r$ near the endpoint of the corresponding interval that is dominated by $r$. These attachment points appear in a well-defined left-to-right order along $r$. 
We draw the incident edges in this order around the vertex-region of $r$. 
Thus, edges that share an endpoint of $r$ meet only at $r$ and do not cross. 
Therefore, $H_R$ is planar.
\end{proof}

The following lemma is symmetric to Lemma~\ref{lemma:H_R}, and its proof is similar.

\begin{lemma} \label{lemma:H_B}
The graph $H_B$ is planar.
\end{lemma}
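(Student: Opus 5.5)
We mirror the proof of Lemma~\ref{lemma:H_R} with the roles of the two colors exchanged. Every edge of $H_B$ is produced by \textbf{Rule~1} applied to some red interval $r \in \I$. Rules~2 and~3 only produce bichromatic edges, so they contribute nothing to $H_B$. Such an edge joins a blue interval $u \in D \cap C(r)$ to a blue interval $v \in O \cap C(r)$, and $u$ and $v$ are consecutive along $r$ among the intervals of $D \cup O$ crossing $r$.

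First we fix a drawing. We place the blue intervals as pairwise non-crossing arcs above a baseline $\ell$, with disjoint intervals drawn as disjoint arcs and nested intervals as nested arcs. This is possible because no two blue intervals cross. We draw the red intervals as non-crossing arcs below $\ell$ in the same way. Each blue interval of $D \cup O$ is treated as a thin vertex-region along its arc. An edge $(u,v)$ contributed by a red interval $r$ is drawn as the sub-arc of $r$ between its crossing point with $u$ and its crossing point with $v$. Red intervals are pairwise non-crossing, and sub-arcs of the same red interval coming from different edges have disjoint interiors, because each red interval contributes at most one edge. Hence two edges can meet only at shared endpoints.

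Next we check that every vertex-region is safe. Suppose a blue interval $w \in D \cup O$ met the sub-arc of $r$ strictly between $u$ and $v$. Then $w$ would cross $r$ between $u$ and $v$, contradicting the choice of $u$ and $v$ as consecutive along $r$. So no edge passes through a vertex-region other than those of its own endpoints.

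Finally, we deal with several edges incident to the same blue interval $b$. Each such edge leaves $b$ at the point where the contributing red interval crosses $b$, and these points are totally ordered along $b$. We attach the edges to the region of $b$ in this order, so they meet only at $b$. The resulting drawing has no crossings, and therefore $H_B$ is planar.

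The only step needing care is this local ordering argument at a vertex-region. We would handle it exactly as in Lemma~\ref{lemma:H_R}: contract each vertex-region to a point, and note that contracting a connected region whose incident curves leave in a fixed cyclic order introduces no crossings.
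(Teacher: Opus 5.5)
Your plan is the paper's: the paper proves this lemma only by appeal to symmetry with Lemma~\ref{lemma:H_R}, and you carry out that mirrored argument. As you say, edges of $H_B$ come only from Rule~1 applied to red intervals, they are drawn along those red intervals, consecutiveness keeps the vertex-regions safe, and you then contract. However, the drawing you fix does not support the key step. You put the blue arcs above $\ell$ and the red arcs below $\ell$. A red arc and a blue arc then lie in opposite open half-planes and share no endpoints, so they never meet. Consequently ``the sub-arc of $r$ between its crossing point with $u$ and its crossing point with $v$'' does not exist in your picture. In the two-sided picture of Figure~\ref{fig:example}(b), crossing of intervals is a combinatorial relation (interleaving endpoints), not a geometric intersection.

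To repair this, draw both colors on the same side of $\ell$, for instance every interval as a semicircle above $\ell$, or work directly in the chord model. This is also what the proof of Theorem~\ref{thm:exchange-graph-planar} requires, since there the entire drawing of $H_B$, edges included, lies above $\ell'$. With semicircles, two arcs meet if and only if the intervals cross, and then in exactly one point; arcs of the same color are pairwise disjoint. Because blue arcs do not cross one another, the order in which they meet a red arc $r$ agrees with the left-to-right order of their endpoints inside $r$. So ``consecutive along $r$'' in Rule~1 coincides with consecutiveness of the crossing points on the drawn arc, which your safety argument implicitly uses. With this drawing, your remaining steps go through as written: one edge per red interval, pairwise disjoint red arcs giving non-crossing edge curves, safety of the blue vertex-regions, and contraction of each vertex-region (disjoint, contractible) to a point.
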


\begin{theorem} \label{thm:exchange-graph-planar}
The exchange graph $H$ is planar.
\end{theorem}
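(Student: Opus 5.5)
We will prove that $H$ is planar by extending the topological drawing used in the proof of Lemma~\ref{lemma:H_R} into a single drawing of all of $H$.

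\textbf{Step 1: one common drawing.} We draw the line $L$, put the red intervals of $D \cup O$ as non-crossing arcs below $L$ and the blue intervals as non-crossing arcs above $L$, exactly as in Lemma~\ref{lemma:H_R}. Each interval is contracted to a thin vertex-region around its arc.

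\textbf{Step 2: Rule~1 edges.} We draw the edges of $H_R$ as sub-curves of blue intervals $b$ (lying above $L$), and the edges of $H_B$ as sub-curves of red intervals (lying below $L$). There are three kinds of potential crossings between them.
\begin{itemize}
\item A sub-curve of a blue $b$ does not cross blue arcs, since blue intervals do not cross.
\item By the consecutiveness in Rule~1, it meets no other red vertex-region between $u$ and $v$.
\item It may cross red arcs, i.e., red vertex-regions. However, the curve crosses a red arc $r$ only where $b$ and $r$ intersect, and consecutiveness forbids this for $r \in D\cup O$ strictly between $u$ and $v$.
\end{itemize}
It remains to rule out crossings between an $H_R$ edge (a piece of blue $b$) and an $H_B$ edge (a piece of red $r'$). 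Such pieces meet only at a crossing point of $b$ and $r'$. At that point, $r'$ crosses $b$ between $u$ and $v$, and $b$ crosses $r'$ between the endpoints of the $H_B$ edge. In the chord model this is a genuine crossing, so I would first reroute. For a Rule~1 edge $(u,v)$ along $b$, instead of following $b$ itself, I would follow a curve hugging $b$ inside the face of the arrangement of $D\cup O$ arcs containing the portion of $b$ between $u$ and $v$. Consecutiveness guarantees that this portion lies in a single face of the arrangement of the arcs of $D\cup O$. Hence every Rule~1 edge is drawn inside one face of the arrangement $\mathcal{A}$ of the $D\cup O$ arcs, connecting two arcs on that face's boundary.

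\textbf{Step 3: reduce to faces.} Once every Rule~1 edge is a chord of a face of $\mathcal{A}$ joining two boundary arcs, planarity reduces to showing that, within each face, these chords can be drawn without crossing. The face boundary is a cyclic sequence of arc pieces. Within a face, two chords $(u,v)$ and $(u',v')$ would cross only if their endpoints interleave around the boundary. The generating intervals $b$ and $b'$ (or $r$ and $r'$) are then either same-colored, hence non-crossing and unable to interleave, or differently colored. The differently colored case needs an argument: I expect to show that interleaving would force one generating interval to cross an arc of $D\cup O$ between its consecutive pair, contradicting Rule~1. This per-face non-interleaving argument is the main obstacle, and I would attack it by case analysis on the four endpoint colors.

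\textbf{Step 4: Rules~2 and~3.} A Rule~2 edge $(I,v)$ joins two arcs of $D\cup O$ that cross each other. It can therefore be drawn as a tiny curve at their crossing point, contained in the vertex-regions' neighbourhood, so it crosses nothing. Vertex-regions of crossing arcs touch there anyway, so this is a contraction-compatible edge.

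Since $\mathcal{A}$ is a planar map whose arcs are contracted to vertex-regions, contracting each arc yields a planar graph (contracting connected regions preserves planarity, as the arrangement graph of a chord family restricted to adjacency plus face chords is planar). Together with Lemmas~\ref{lemma:H_R} and~\ref{lemma:H_B}, which handle the same-color parts, this gives that $H$ is planar.
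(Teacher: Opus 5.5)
There is a genuine gap, and it comes from using one common drawing in which the red and blue arcs of $D\cup O$ cross each other. (As written, Step~1 puts red below $L$ and blue above $L$. There nothing crosses, so your Rule~1 edges are not even defined; you implicitly switch to the chord model.)

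\textbf{Overlapping vertex-regions.} Contracting vertex-regions to points gives a planar embedding only when the regions are pairwise disjoint. Here every crossing red/blue pair of $D\cup O$ shares a point. The principle you invoke at the end is false in general: three parallel red chords and three parallel blue chords forming a grid have $K_{3,3}$ as crossing graph, and tiny curves at the nine crossings would ``prove'' $K_{3,3}$ planar. This invalidates Step~4. Note also that you never use the fact that Rules~2 and~3 choose the \emph{shortest} dominating interval, which is exactly what those edges need.

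\textbf{Interleaving inside a face.} Step~3 fails too, because the per-face non-interleaving you hope for is false. In the same grid, call the red chords $u,r',v$ and the blue chords $u',b,v'$ in order, and let $D=\{u,u'\}$, $O=\{v,v'\}$; both are dominating. Then $b\notin D\cup O$ generates the Rule~1 edge $(u,v)$, and $r'\notin D\cup O$ generates $(u',v')$. Both pieces lie in the central face, whose boundary reads $u,v',v,u'$ cyclically. So the two face-chords interleave and must cross within the face, yet Rule~1 is not violated because $b,r'\notin D\cup O$. You correctly spotted that an $H_R$ edge and an $H_B$ edge can cross in the chord model, but no rerouting inside faces removes this.

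\textbf{The missing idea} is to keep red and blue vertex-regions apart altogether. Rule~1 edges are monochromatic, so $H_R$ and $H_B$ are vertex-disjoint. The paper draws $H_R$ entirely below a line $\ell$, with pairwise disjoint red arcs as vertex-regions and edges along copies of blue intervals. It draws $H_B$ entirely above a parallel line $\ell'$. Then an $H_R$ edge and an $H_B$ edge can never meet, and no vertex-regions overlap. Only the bichromatic Rule~2/3 edges must be routed across the strip. For a blue $b\in D$ whose chosen partner $r\in O$ contains, say, $e(b)$, the edge runs vertically from $e(b)$ on $\ell'$ down to $\ell$, then along the reflected copy of $b$ until it meets $r$. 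That portion stays inside the region bounded by $r$. Hence any red arc of $D\cup O$ it met would be nested in $r$ and cross $b$, contradicting either $D\cap C(b)=\emptyset$ or the minimality of $r$. It misses $H_R$ edges because those run along copies of other blue intervals, which do not cross $b$. This is where the shortest choice is essential, and it is the ingredient your Step~4 lacks.
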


\begin{proof}
By Lemma~\ref{lemma:H_R}, $H_R$ is planar, and by Lemma~\ref{lemma:H_B}, $H_B$ is planar.
Moreover, the vertex sets of $H_R$ and $H_B$ are disjoint since one consists only of red intervals and the other consists only of blue intervals. 
Thus, the union of $H_R$ and $H_B$ is also planar. It remains to show that the edges produced by \textbf{Rule~2} and \textbf{Rule~3}, connecting the red intervals of $H_R$ and the blue intervals of $H_B$, can be drawn without crossings.

Assume that the drawing of $H_R$ is below the horizontal line $\ell$ and the drawing of $H_B$ is above the horizontal line $\ell'$, which is one unit above $\ell$. 
Let $b$ be a blue interval and let $(b,r)$ be the edge contributed by $b$ according to \textbf{Rule~2}. 
Hence, $b \in D$ is drawn above $\ell'$, and $r$ is the shortest red interval in $O$ that dominates $b$ is drawn below $\ell$; see Figure~\ref{fig:planar2}. 
Assume, w.l.o.g., that $r$ contains the endpoint $e(b)$ of $b$. 
Let $p(b)$ be the projection of $e(b)$ on $\ell$ and let $ref(b)$ be the reflection of $b$ below $\ell$. 
We draw the edge $(b,r)$ as a curve consisting of a vertical segment connecting $s(b)$ to $e(b)$ and the portion of $ref(b)$ between $p(b)$ and the intersection point of $ref(b)$ with $r$.

Clearly, $(b,r)$ does not cross the vertex-regions of the blue intervals of $H_B$ and does not cross the vertex-regions of the red intervals of $H_B$ that contain $r$.
Since the blue intervals do not cross each other, $(b,r)$ does not cross the edges of $H_R$, since these edges are drawn as portions of blue intervals.  
Moreover, since $D \cap C(b) = \emptyset$, no red interval in $D$ contains $p(b)$, and since $r$ is the shortest red interval in $O$ that intersects $b$, no red interval of $O$ that is contained in $r$ contains $p(b)$. 
Thus, $(b,r)$ does not cross the vertex-regions of the red intervals of $H_R$. 
\begin{figure}[htb]
    \centering
    \includegraphics[width=0.63\textwidth]{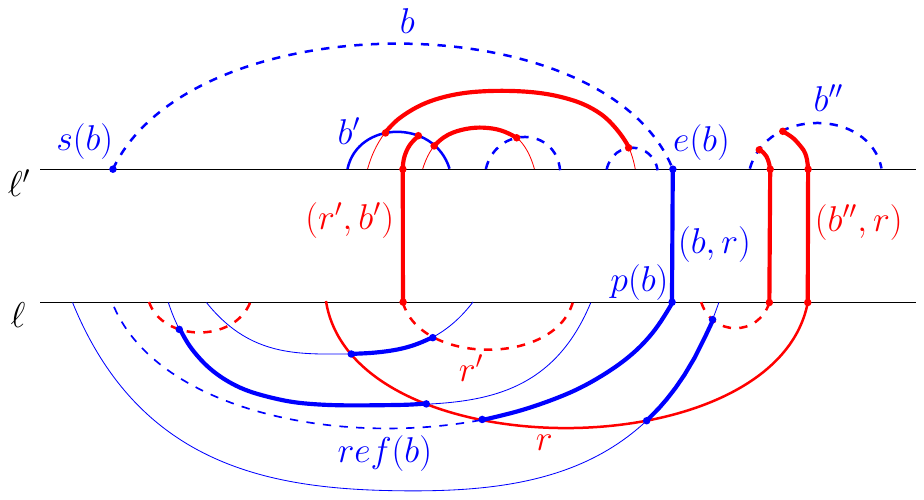}
    \caption{The drawing of the edges produced by Rule~2 and Rule~3. The edge $(b,r)$ is contributed by a blue interval $b \in D$ and a red interval $r \in O$, the edge $(r',b')$ is contributed by a red interval $r' \in D$ and a blue interval $b' \in O$, and the edge $(b'',r)$ is contributed by a blue interval $b'' \in D$ and a red interval $r \in O$.}
    \label{fig:planar2}
\end{figure}

We draw the edges contributed by the red intervals similarly (see, for example, the edge $(r',b')$ and $(b'',r)$ in Figure ~\ref{fig:planar2}). 
Hence, the edges produced by \textbf{Rule~2} and \textbf{Rule~3} can be drawn without crossings. 
Therefore, $H$ is planar.
\end{proof}

\section{Concluding Remarks}
We studied the minimum dominating set problem on bipartite circle graphs. We proved that the problem remains NP-hard even in this restricted class by presenting a reduction from Planar Monotone 3-SAT. On the positive side, we developed a polynomial-time 2-approximation algorithm based on dynamic programming and showed that a local-search algorithm yields a polynomial-time approximation scheme.

Several interesting questions remain open. In particular, it would be interesting to determine whether the problem admits an exact polynomial-time algorithm for further restricted subclasses of bipartite circle graphs or whether faster approximation schemes can be obtained.


\bibliographystyle{plainurl}

\begin{thebibliography}{10}

\bibitem{Booth1982}
K.~S. Booth and J.~H. Johnson.
\newblock Dominating sets in chordal graphs.
\newblock {\em SIAM Journal on Computing}, 11(1):191--199, 1982.
\newblock \href {https://doi.org/10.1137/0211015} {\path{doi:10.1137/0211015}}.

\bibitem{Bousquet2014}
N.~Bousquet, D.~Gonçalves, G.~B. Mertzios, Ch. Paul, I.~Sau, and S.~Thomassé.
\newblock Parameterized domination in circle graphs.
\newblock {\em Theory of Computing Systems}, 54(1):45--72, 2014.
\newblock \href {https://doi.org/10.1007/s00224-013-9478-8}
  {\path{doi:10.1007/s00224-013-9478-8}}.

\bibitem{Brandstadt1987}
A.~Brandstädt and D.~Kratsch.
\newblock On domination problems for permutation and other graphs.
\newblock {\em Theoretical Computer Science}, 54(2):181--198, 1987.
\newblock \href {https://doi.org/10.1016/0304-3975(87)90128-9}
  {\path{doi:10.1016/0304-3975(87)90128-9}}.

\bibitem{Chan2012}
T.~M. Chan and S.~Har-Peled.
\newblock Approximation algorithms for maximum independent set of pseudo-disks.
\newblock {\em Discrete \& Computational Geometry}, 48(2):373--392, 2012.
\newblock \href {https://doi.org/10.1007/s00454-012-9417-5}
  {\path{doi:10.1007/s00454-012-9417-5}}.

\bibitem{Cockayne1980}
E.~J. Cockayne, R.~M. Dawes, and S.~T. Hedetniemi.
\newblock Total domination in graphs.
\newblock {\em Networks}, 10(3):211--219, 1980.
\newblock \href {https://doi.org/10.1002/net.3230100304}
  {\path{doi:10.1002/net.3230100304}}.

\bibitem{Corcoran2021}
P.~Corcoran and A.~Gagarin.
\newblock Heuristics for k-domination models of facility location problems in
  street networks.
\newblock {\em Computers \& Operations Research}, 133:105368, 2021.
\newblock \href {https://doi.org/10.1016/j.cor.2021.105368}
  {\path{doi:10.1016/j.cor.2021.105368}}.

\bibitem{Corneil1984}
D.~G. Corneil and Y.~Perl.
\newblock Clustering and domination in perfect graphs.
\newblock {\em Discrete Applied Mathematics}, 9(1):27--39, 1984.
\newblock \href {https://doi.org/10.1016/0166-218X(84)90088-X}
  {\path{doi:10.1016/0166-218X(84)90088-X}}.

\bibitem{Czuba2025}
M.~Czuba, M.~Jia, P.~Bródka, and K.~Musial.
\newblock Applicability of the minimal dominating set for influence
  maximization in multilayer networks.
\newblock {\em Journal of Complex Networks}, 13(6):cnaf036, 2025.
\newblock \href {https://doi.org/10.1093/comnet/cnaf036}
  {\path{doi:10.1093/comnet/cnaf036}}.

\bibitem{Damaschke1989}
P.~Damaschke.
\newblock The {H}amiltonian circuit problem for circle graphs is {NP}-complete.
\newblock {\em Information Processing Letters}, 32(1):1--2, 1989.
\newblock \href {https://doi.org/10.1016/0020-0190(89)90059-8}
  {\path{doi:10.1016/0020-0190(89)90059-8}}.

\bibitem{Damian2006}
M.~Damian and S.~V. Pemmaraju.
\newblock {APX}-hardness of domination problems in circle graphs.
\newblock {\em Information Processing Letters}, 97(6):231--237, 2006.
\newblock \href {https://doi.org/10.1016/j.ipl.2005.11.007}
  {\path{doi:10.1016/j.ipl.2005.11.007}}.

\bibitem{Damian-Iordache2002}
M.~Damian-Iordache and S.~V. Pemmaraju.
\newblock A $(2+\varepsilon)$-approximation scheme for minimum domination on
  circle graphs.
\newblock {\em Journal of Algorithms}, 42(2):255--276, 2002.
\newblock \href {https://doi.org/10.1006/jagm.2001.1206}
  {\path{doi:10.1006/jagm.2001.1206}}.

\bibitem{deBerg2012}
M.~De~Berg and A.~Khosravi.
\newblock Optimal binary space partitions for segments in the plane.
\newblock {\em International Journal of Computational Geometry \&
  Applications}, 22(03):187--205, 2012.
\newblock \href {https://doi.org/10.1142/S0218195912500045}
  {\path{doi:10.1142/S0218195912500045}}.

\bibitem{Farber1982}
M.~Farber.
\newblock Independent domination in chordal graphs.
\newblock {\em Operations Research Letters}, 1(4):134--138, 1982.
\newblock \href {https://doi.org/10.1016/0167-6377(82)90015-3}
  {\path{doi:10.1016/0167-6377(82)90015-3}}.

\bibitem{Federickson1987}
G.~N. Federickson.
\newblock Fast algorithms for shortest paths in planar graphs, with
  applications.
\newblock {\em SIAM Journal on Computing}, 16(6):1004--1022, 1987.
\newblock \href {https://doi.org/10.1137/0216064} {\path{doi:10.1137/0216064}}.

\bibitem{Garey1978}
M.~R. Garey and D.~S. Johnson.
\newblock {\em Computers and Intractability: A Guide to the Theory of
  NP-Completeness}.
\newblock Freeman, San Francisco, 1978.

\bibitem{Gavril1973}
F.~Gavril.
\newblock Algorithms for a maximum clique and a maximum independent set of a
  circle graph.
\newblock {\em Networks}, 3(3):261--273, 1973.
\newblock \href {https://doi.org/10.1002/net.3230030305}
  {\path{doi:10.1002/net.3230030305}}.

\bibitem{Gavril2008}
F.~Gavril.
\newblock Minimum weight feedback vertex sets in circle graphs.
\newblock {\em Information Processing Letters}, 107(1):1--6, 2008.
\newblock \href {https://doi.org/10.1016/j.ipl.2007.12.003}
  {\path{doi:10.1016/j.ipl.2007.12.003}}.

\bibitem{Goddard2013}
W.~Goddard and M.~A. Henning.
\newblock Independent domination in graphs: A survey and recent results.
\newblock {\em Discrete Mathematics}, 313(7):839--854, 2013.
\newblock \href {https://doi.org/10.1016/j.disc.2012.11.031}
  {\path{doi:10.1016/j.disc.2012.11.031}}.

\bibitem{Haynes2017}
T.~W. Haynes, S.~Hedetniemi, and P.~Slater.
\newblock {\em Domination in Graphs: Volume 2: Advanced Topics}.
\newblock CRC Press, 2017.

\bibitem{Johnson1985}
D.~S. Johnson.
\newblock The {NP}-completeness column: an ongoing guide.
\newblock {\em Journal of Algorithms}, 6(3):434--451, 1985.
\newblock \href {https://doi.org/10.1016/0196-6774(85)90012-4}
  {\path{doi:10.1016/0196-6774(85)90012-4}}.

\bibitem{Keil1993}
J.~M. Keil.
\newblock The complexity of domination problems in circle graphs.
\newblock {\em Discrete Applied Mathematics}, 42(1):51--63, 1993.
\newblock \href {https://doi.org/10.1016/0166-218X(93)90178-Q}
  {\path{doi:10.1016/0166-218X(93)90178-Q}}.

\bibitem{Keil2006}
J.~M. Keil and L.~Stewart.
\newblock Approximating the minimum clique cover and other hard problems in
  subtree filament graphs.
\newblock {\em Discrete Applied Mathematics}, 154(14):1983--1995, 2006.
\newblock \href {https://doi.org/10.1016/j.dam.2006.03.003}
  {\path{doi:10.1016/j.dam.2006.03.003}}.

\bibitem{Kelleher1988}
L.~L. Kelleher and M.~B. Cozzens.
\newblock Dominating sets in social network graphs.
\newblock {\em Mathematical Social Sciences}, 16(3):267--279, 1988.
\newblock \href {https://doi.org/10.1016/0165-4896(88)90041-8}
  {\path{doi:10.1016/0165-4896(88)90041-8}}.

\bibitem{Laskar1983}
R.~Laskar and J.~Pfaff.
\newblock Domination and irredundance in split graphs.
\newblock In {\em Technical Report 430}. Clemson University Clemson, SC, 1983.

\bibitem{Lasker1984}
R.~Laskar, J.~Pfaff, S.~M. Hedetniemi, and S.~T. Hedetniemi.
\newblock On the algorithmic complexity of total domination.
\newblock {\em SIAM Journal on Algebraic Discrete Methods}, 5(3):420--425,
  1984.
\newblock \href {https://doi.org/10.1137/0605040} {\path{doi:10.1137/0605040}}.

\bibitem{Mustafa2010}
N.~H. Mustafa and S.~Ray.
\newblock Improved results on geometric hitting set problems.
\newblock {\em Discrete \& Computational Geometry}, 44(4):883--895, 2010.
\newblock \href {https://doi.org/10.1007/s00454-010-9285-9}
  {\path{doi:10.1007/s00454-010-9285-9}}.

\bibitem{Reddy2026}
S.~B. Reddy, A.~K. Das, A.~S. Kare, and I.~V. Reddy.
\newblock On the complexity of global {R}oman domination problem in graphs,
  2026.
\newblock \href {https://arxiv.org/abs/2601.09167} {\path{arXiv:2601.09167}}.

\bibitem{Unger1988}
W.~Unger.
\newblock On the $k$-colouring of circle-graphs.
\newblock In {\em Proceedings of the 5th Annual Symposium on Theoretical
  Aspects of Computer Science (STACS)}, pages 61--72, 1988.
\newblock \href {https://doi.org/10.1007/BFb0035832}
  {\path{doi:10.1007/BFb0035832}}.

\bibitem{Wu1999}
J.~Wu and H.~Li.
\newblock On calculating connected dominating set for efficient routing in ad
  hoc wireless networks.
\newblock In {\em Proceedings of the 3rd International Workshop on Discrete
  Algorithms and Methods for Mobile Computing and Communications (DIALM)}, page
  7–14, 1999.
\newblock \href {https://doi.org/10.1145/313239.313261}
  {\path{doi:10.1145/313239.313261}}.

\end{thebibliography}

\end{document}